\renewcommand{\thispagestyle}[1]{} 
\newtheorem{theorem}{Theorem}[section]
\newtheorem{lemma}[theorem]{Lemma}
\newtheorem{Corollary}[theorem]{Corollary}
\title{Census: Fast, scalable and robust data aggregation in MANETs
}
\author[1]{V. Kulathumani\thanks{vinod.kulathumani@mail.wvu.edu}}
\author[2]{A. Arora\thanks{anish.arora@samraksh.com}}
\author[2]{M. Sridharan\thanks{mukundan.sridharan@samraksh.com}}
\author[2]{K. Parker\thanks{kenneth.parker@samraksh.com}}
\author[1]{M. Nakagawa\thanks{manakagawa@mix.wvu.edu}}
\affil[1]{Department of Computer Science, West Virginia University, Morgantown WV 26505}
\affil[2]{The Samraksh Company, Dublin OH 43017}
\begin{document}

\thispagestyle{fancy}
\maketitle

\doublespacing

\begin{abstract}
This paper describes {\em Census}, a protocol for data aggregation and statistical counting in MANETs. {\em Census} operates by circulating a set of tokens in the network using {\em biased random walks} such that each node is visited by at least one token. The protocol is structure-free so as to avoid high messaging overhead for maintaining structure in the presence of node mobility. It biases the random walks of tokens so as to achieve fast cover time; the bias involves short albeit multi-hop gradients that guide the tokens towards hitherto unvisited nodes. {\em Census} thus achieves a cover time of $O(N/k)$ and message overhead of $O(N log(N)/k)$ where $N$ is the number of nodes and $k$ the number of tokens in the network. Notably, it enjoys scalability and robustness, which we demonstrate via simulations in networks ranging from $100$ to $4000$ nodes under different network densities and mobility models.
\end{abstract}

\let\thefootnote\relax\footnotetext{The email point of contact is vinod.kulathumani@mail.wvu.edu. This work was supported in part by Defense Advanced Research Projects Agency (DARPA) contract FA8750-12-C-0278. The views, opinions, and/or findings contained in this paper are those of the authors and should not be interpreted as representing the official views or policies of the Department of Defense or the U.S. Government.}

\noindent {\bf Keywords:}~~random walk, statistical aggregation, gossip, local gradients

\section{Introduction}
This paper presents {\em Census}, a fast, scalable and robust protocol for data aggregation in MANETs. {\em Census} is a structure-free protocol that relies on biased random walks to achieve aggregation. The protocol operates by circulating a set of one or more tokens using biased random walks, in such a way that every node in the network is visited by at least one token. We say that a node is {\em visited} by a token when the node gets exclusive access to the token; the visitation period can be used by the node to add node-specific information into the token, resulting in data aggregation.  Note that the concept of visiting all nodes individually differs from that of token dissemination \cite{mmgossip, trickle} over the entire network where it suffices for every node to simply hear at least one token, as opposed to getting exclusive access to a token. 

There are many applications for an all node visitation service, such as voting, computing aggregates (i.e., max, min, or average), statistical counting (i.e., estimating the fraction of nodes that satisfy a state predicate), and computing empirical distributions of data in a network \cite{learnhist}. Example application scenarios include computing average sensor measurements, counting battalions and ammunition in military networks, and computing aggregate traffic densities in vehicular networks. 


In static networks with stable links, data aggregation can be realized by traversing fixed routing structures such as trees or network backbones \cite{naik_sprinkler, tag, ctp, directdiff}. However, in mobile networks and networks with frequent link changes, topology driven structures are likely to be unstable and to incur a high communication overhead for maintenance.  In a recent paper, we have analyzed why routing protocols for MANETs such as OLSR \cite{rtsw3} are unable to scale beyond $100$-$150$ nodes, in terms of a {\em scaling wall} \cite{rtswarxiv}. Firstly, as network size increases, paths are more likely to fail --- the path connectivity interval falls as $O(\sqrt{N})$, with increasing network size $N$. Secondly, even small changes in node speed significantly increase this path failure rate and instability. Finally, in order to be successfully used for routing, the broken paths need to be fixed much faster than the rate at which they break and this involves fast link estimation for discovering broken paths and then repairing those paths, both of which incur high message overhead. 


Thus, in contrast to the static network case, {\em Census}, exploits the simplicity of random walks to achieve token coverage in MANETs. Random walks are attractive for MANETs because they are inherently stable in the presence of network dynamics, have no critical points of failure, avoid structure maintenance, and have very little state overhead. In fact, {\em Census} is largely mobility agnostic --- motion models and node speeds have hardly any impact on its performance, as a major component of the protocol is simply token passing. 
In a pure random walk, a node that holds a token picks a random node in its neighborhood and transfers the token to that node. The first time that any token visits a node can be used by the node to add its state into the aggregate being computed. This process is repeated until all nodes have been visited. Unfortunately, the {\em cover time} for random walks (time to visit all nodes) is typically high because of wasted exploration when a token repeatedly encounters already visited nodes. In order to expedite the cover time, we explore in this paper the idea of partially guiding random walks towards unvisited nodes.


{\bf Local bias:}~~Let us first consider biasing based on local information only, by giving preference to unvisited nodes in the neighborhood when forwarding the token.  So if there are one or more unvisited nodes in the direct one-hop neighborhood (i.e., within the communication range) of a token, the token is passed to one of the unvisited nodes chosen at random. We show analytically that local bias, by itself, achieves a cover time of $O(Nlog(N)/k)$ where $k$ is the number of tokens being used. We moreover prove that by just using local bias, a significant portion of the network can be covered without significant wasted exploration at already visited nodes. However, when the fraction of already visited nodes in the network rises beyond a certain threshold, local biasing exhibits a slowdown. This is because when all the nodes within the communication range of a token holder are already visited, the scheme reduces to a canonical random walk until an unvisited node becomes a neighbor. While the order of convergence in relation to $N$ remains $O(Nlog(N))$, the slowdown creates a long tail in the convergence and significantly increases cover time.  To redress this shortcoming, we use a complementary method that further speeds up the cover time, which forms the basis of {\em Census}.


{\bf Multi-hop gradient bias (Census):}~~To prevent the random walks from getting stuck in regions of visited nodes while there are still unvisited nodes to be explored, we set up short, temporary multi-hop gradients to pull the token towards unvisited nodes. We show analytically that this yields a cover time of $O(N/k)$, where $k$ is the number of tokens being used. Thus, the order of convergence improves by a factor of $log(N)$. In doing so, {\em Census} introduces a gradient message overhead of $O(Nlog(N)/k)$, to pull tokens towards unvisited nodes. Nevertheless, this overhead is compensated by a reduction in the required number of token transfers. In fact, our simulations show that the ratio of token transfers to that of node size, i.e., the exploration overhead of gradient biased random walk remains close to $2$ even for networks as large as $4000$ nodes.

\vspace*{-1mm}
\subsection{Summary of contributions}

We introduce the idea of applying biased random walks for the node visitation problem in MANETs, which to the best of our knowledge has not been explored before. We show analytically that {\em Census} has a fast cover time of $O(N/k)$ in terms of network size $N$ and token count $k$. We also show that the overall communication cost for {\em Census} is $O(Nlog(N)/k)$. We characterize the impact of network density on the cover time and message overhead. We analytically compare the performance of {\em Census} with regular random walks, flooding, gossip, diffusion and structure based routing protocols as well. We corroborate all of our analytical results for {\em Census} using ns-3 based simulations of mobile networks ranging from $100$ to $4000$ nodes, under different network densities and mobility models. Our simulation results demonstrate that biased random walks are simple, yet effective tools for achieving scalable and robust token coverage in MANETs. 

\vspace*{-1mm}
\subsection{ Outline of the paper}

In Section $2$, we describe the system model. In Section $3$, we describe the {\em Census} protocol. In Section $4$, we present an analytical characterization of convergence time and message overhead for {\em Census}, and compare this against that of random walks with local bias and pure random walks (without bias). In Section $5$, we evaluate the performance of Census and compare it against pure random walks, random walks with local bias, flooding, gossip, diffusion and structure-based aggregation protocols. In Section $6$, we discuss implementation considerations for {\em Census} in a MANET such as handling message losses and termination detection. We discuss related work in Section $7$ and make concluding remarks in Section $8$.

\section{System model}
We consider a mobile network of $N$ nodes deployed over a two dimensional region whose width and height scale as $\theta(\sqrt{N} \times \sqrt{N})$ and whose communication range and density are constant irrespective of network size $N$. For our analysis, we focus on a random walk mobility model \cite{rw10, rw11} for the nodes, whereas for purposes of simulation we consider a variety of mobility models such as random waypoint and Gauss-Markov. In the random walk mobility model, at each interval a node picks a random direction uniformly in the range $[0,2\pi]$ and moves with a constant speed that is randomly chosen in the range $[v_l,v_h]$ for a constant distance $\gamma$. At the end of each interval, a new direction and speed are calculated. This model is Brownian in its characteristics; the Brownian model can be described as a scaling limit of this motion model under small step sizes \cite{rw9}. The random walk motion model results in node locations that are uniformly distributed across the network \cite{rw10}. Therefore, although the density of the nodes is time varying, we note that over time the average number of nodes per unit disk communication range is constant (denoted as $d$).

One or more tokens are introduced at random locations that are uniformly distributed within the network. The objective of {\em Census} is to pass the tokens around the network such that every node in the network is visited by at least one token. Recall that we say that a node is visited by a token when it gets exclusive access to the token.

\section{Census protocol}
{\em Census} consists of two components: (i) token passing, and (ii) gradient setup. To realize these components, each node stores three variables, {\em visited}, {\em holder} and {\em level}. The variable {\em visited} is a Boolean value that tracks whether a node has been visited by any of the tokens; {\em visited} is initially {\em false} at all nodes. When a token first arrives at a node, {\em visited} is set to $true$. Tokens are assumed to be initiated at a random set of nodes. All nodes in which a token is initiated are marked as visited by default and the token value is initialized to the data at the corresponding node. The variable {\em holder} is used to identify nodes that currently hold a token. When a gradient bias is used, each node also participates in a gradient setup process to attract tokens towards unvisited nodes. To do so, each node uses the state variable level where $0 \leq level \leq 1$. Nodes that are unvisited are at $level=1$. Nodes that hold a token set $level$ to $0$ as soon as they receive a token.

\subsection{Token passing}

\subsubsection{Token passing with local bias only}

\begin{figure}[t]
  \begin{center}
    \includegraphics[width=.35\textwidth]{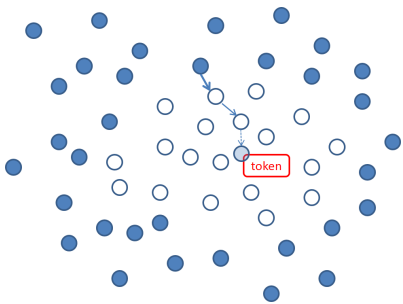}
    \caption{During token passing, a token may be surrounded by an island of visited nodes (white circles), i.e., all neighboring nodes have already been visited. Nodes that have not yet been visited (indicated by dark circles) periodically set up a gradient using the set of visited nodes to attract the token towards them.}
    \label{fig:gradient}
  \end{center}
\end{figure}

For random walks with only a local bias, a token holder announces that it has a token. Nodes that hear this message and have not been visited, start a timer to send a request at a random slot within a chosen interval $[0, .. , \frac{T_r}{2}]$. Nodes that hear this message and have already been visited, start a timer to send a request at a random slot within a chosen interval $[\frac{T_r}{2}, .. , T_r]$. This ensures that unvisited nodes get a chance to transmit before visited nodes. Also, if a node hears another request being sent, it suppresses its own request. This ensures that the number of requests being sent remain low and fairly constant irrespective of the network density. 

A timer $T_r$ is started at the token holder to accept requests for the token. The token holder picks a random unvisited node if at least one unvisited node sends a request. Otherwise, the token holder picks a random visited node. The token is transferred to the chosen node. The node that receives the token marks itself as visited if it was unvisited so far. If the token is used for data aggregation, an already visited node may not add its information again to a token. This concludes the procedure for token passing using random walks with only a local bias. The token is continued to pass iteratively using this procedure. 

\subsubsection{Token passing with gradient bias}

For random walks with a gradient bias (i.e. Census), a token holder announces that it has a token. Nodes that hear this message send a request at a random slot within a chosen interval $T_r$. A timer $T_r$ is started at the token holder to accept requests for the token. All nodes with $level>0$ randomize their response time and reply to the token announcement along with their current level. Nodes with $level>0$ are nodes that have either not been visited $(level=1)$ or nodes that have been visited and are now part of a gradient $(0 < level < 1)$. If a node hears another request being sent with a $level$ greater than itself, it suppresses its own request. The token holder stores all requests received during time $T_r$. The replies are sorted based on the level of the requestors and the token is sent to the node with the highest level. When multiple requestors exist with the same level, the token recipient is chosen randomly among that set. Thus if any unvisited node requests a token, the token will be sent to that node. If all nodes that have currently requested the token have been visited, the token is sent to the node with the highest value of level, which is expected to be the node that is closest to an unvisited node. As soon as a token reply has been sent, the node resets holder to $0$. The following section describes how short multi-hop gradients are setup to attract tokens towards unvisited nodes.

\subsection{Gradients}

\subsubsection{Gradient setup}

During Census operation a token can get stuck inside a region where all its neighbors have already been visited. To recover from such a scenario, a gradient is setup in the network to attract tokens towards unvisited nodes, i.e., nodes with $level=1$ (See Figure 1). This is done as follows. Nodes with $level=1$, for which none of their neighbors currently hold a token and have at least one neighboring node with $level=0$, initiate a gradient setup by broadcasting a gradient message. Nodes with $level=0$ that receive a gradient message from update their level to half of sender's level and rebroadcast the gradient message. Thus, gradient broadcasts propagate only till the region where nodes with non-zero level are present, filling up the gap between an unvisited node and other nodes with non-zero levels. 

\subsubsection{Gradient refresh}

To account for node mobility, gradients have to be periodically refreshed. To do so, when a node updates its level from zero to some non-zero value $< 1$, it starts a timer proportionate to the new level and when the timer expires it resets its level back to $0$. Thus nodes with higher values of level are refreshed slower than smaller values. This heuristic is based on two reasons. (1) Gradients should preferably not be refreshed before a token is able to climb up a gradient and reach an unvisited node. By refreshing at a rate proportional to the value of level, a token gets more time to reach closer to the source of the gradient. (2) Nodes that are far away from an unvisited node (closer to the bottom of the gradient) should prevent blocking of gradient setup from unvisited nodes that are nearby, for extended periods of time.

\section{Census analysis}
For mesh networks modeled as geometric graphs with uniform degree of connectivity, the expected cover time for pure random walks (no biasing) is known to be $O(N log^2 N)$ \cite{rw7}. However, pure random walks are likely to incur substantial wasted exploration, wherein a token repeatedly encounters nodes that are already visited. In this section, we characterize how biasing the tokens towards unvisited nodes improves this bound. We quantify the expected bounds on cover time as well as message overhead for both local and gradient biasing. We also characterize the impact of multiple tokens on the convergence time and message overhead. Our results show that random walks with local bias have a cover time of $O(Nlog(N)/k)$, and {\em Census} further reduces the cover time to $O(N/k)$ in terms of network size using short, multi-hop gradient bias. In the remainder of this section, we analytically compare Census with traditional techniques for data aggregation such as structure based protocols, flooding, and gossip.


To begin with, let us prove a lemma that characterizes the expected geometric distance between unvisited tokens over time. This is useful because in the node visitation problem as long as there is even a single unvisited neighbor, a token will be transferred to that node. To this end, we map our requirement to that of determining the minimum network density for connectivity in a wireless network and use nearest neighbor methods \cite{mindensity} that are often used in these analyses.  

\begin{lemma}
There exists at least one unvisited node within $h$-hops of a token holder in Census with probability $p$ as long as the fraction of unvisited nodes in the network $z$ satisfies $z>\theta/(h^2d)$, where $\theta = -ln(1-p)$. \label{lemma1}
\end{lemma}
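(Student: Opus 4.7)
The plan is to reduce this to a standard nearest-neighbor / coverage calculation over a Poisson-like point process. Recall from the system model that the random walk mobility ensures that node positions are (time-averaged) uniformly distributed, with an average of $d$ nodes per disk of one communication radius $r$. Given a fraction $z$ of unvisited nodes, the time-averaged density of unvisited nodes is also uniform, so the unvisited nodes can be well-modeled as an independent thinning of a Poisson point process with intensity $\lambda_u = z \cdot d / (\pi r^2)$.

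First, I would identify the $h$-hop neighborhood of the token holder with a disk of radius $hr$ centered at the token holder. This identification is the place where I would invoke the density assumption: at the operating density $d$ (which the paper implicitly assumes is above the connectivity/percolation threshold), the $h$-hop reachable set and the geometric disk of radius $hr$ coincide up to boundary effects that vanish as the network scales. Second, I would compute the expected number of unvisited nodes in this disk: its area is $\pi (hr)^2 = h^2 \cdot \pi r^2$, so the expected count is $\lambda_u \cdot h^2 \pi r^2 = z \cdot h^2 \cdot d$.

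Third, using the Poisson model (or, equivalently, the standard approximation used in the minimum-density-for-connectivity literature cited as \cite{mindensity}), the probability that no unvisited node lies in the $h$-hop disk is
\[
\Pr[\text{no unvisited within } h\text{ hops}] \;=\; e^{-z h^2 d}.
\]
Requiring the complementary probability to be at least $p$ gives $1 - e^{-z h^2 d} \geq p$, i.e., $e^{-z h^2 d} \leq 1 - p$. Taking logarithms and rearranging yields $z h^2 d \geq -\ln(1-p) = \theta$, which is precisely the stated condition $z > \theta/(h^2 d)$.

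The main obstacle, and the only non-routine step, is justifying the identification of the $h$-hop neighborhood with the geometric disk of radius $hr$, because multi-hop reachability requires a chain of relays to actually exist. I would handle this by appealing to the standing assumption that $d$ is above the threshold needed for local connectivity (so that with high probability any node inside the disk is reachable within $h$ hops), and by noting that the resulting boundary correction is lower-order and only strengthens the inequality, so the clean bound $z > \theta/(h^2 d)$ remains sufficient.
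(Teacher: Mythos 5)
Your proposal is correct and follows essentially the same route as the paper: the nearest-neighbor distance CDF $P(\epsilon<r)=1-e^{-\rho\pi r^2}$ that the paper quotes from \cite{mindensity} is exactly the Poisson void-probability computation you carry out, and the conversion from the density of unvisited nodes to the fraction $z$ via $N/A=d/(\pi R^2)$ is identical. The only difference is that you explicitly flag the identification of the $h$-hop neighborhood with the disk of radius $hr$ (which the paper asserts without comment), a reasonable clarification rather than a change of method.
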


\begin{proof}
Let $n_u$ denote the number of unvisited nodes at a given time in the network and let $\rho_u$ denote the uniform density of the network. Note that a token may be transferred to any unvisited neighbor with equal probability resulting in the {\em random walk traversing the network uniformly}. It follows that it is not the case that a token executing a random walk will visit all nodes that are clustered together before moving out. At each step, the choice of the next node to traverse is random. Therefore, given that a network is connected and uniformly distributed, a random walk of the tokens is expected to traverse the network uniformly. And it is indeed the case that a random walk 2d mobility model results in node locations that are uniformly distributed across the network \cite{rw10}. 

We now recall a result on the distribution of nearest neighbor distance in two dimensions. For a homogeneous Poisson point process in two dimensions, the probability density function of nearest neighbor distance is as follows \cite{mindensity}. 

\vspace{-2mm}
\begin{equation}
f(\epsilon) = 2\pi\rho\epsilon e^{(-\rho \pi \epsilon^2)}
\end{equation}

The probability that the distance between a randomly chosen node and its nearest neighbor is less than or equal to some distance $r$ is given by

\vspace{-4mm}
\begin{equation}
P(\epsilon<r)=1- e^{(-\rho \pi r^2)}
\end{equation}

\normalsize

If we denote $R$ as the communication range of each node, then we observe that there exists at least one unvisited node within $h$ hops (i.e., a radius of $hR$) of a token holder as long as the density of unvisited nodes satisfies:

\vspace{-4mm}
\begin{equation}
\rho_u >  (-ln(1-p))/(\pi h^2 R^2 )
\end{equation}

\normalsize

Recall from our system model that there are $N$ nodes uniformly distributed over an area $A$ and $d$ is the number of nodes per unit communication range ($\pi R^2$) of the network at any time. Thus, $N / A =  d /( \pi R^2)$.   Using this we have

\vspace{-3mm}
\begin{equation}
\rho_u  A / N >  (-ln(1-p))/(h^2 d)
\end{equation}

\normalsize

Note that $\rho_uA/N$ denotes the fraction of unvisited nodes in the network and hence we have the result. 
\end{proof}

\noindent {\bf Significance:}~~ When $p=0.95$, $d=10$ and $h=1$, $\theta /(k^2 d)=0.3$. Thus, as long as less than $70\%$ of the nodes are covered, random walks with local bias are expected to find an unvisited node within one hop 95\% of the time. With $h=2$, we see that until about $90\%$ coverage, an unvisited node can be expected within a $2$ hop neighborhood of a token. This highlights the extent to which bounded locality biased random walks can cover a significant portion of the network can be without much wasted exploration and without any supporting network structures. 

\subsection{Random walks with local bias}

\begin{theorem}
Both expected cover time and the expected number of token transfers in a random walk with local bias in a connected, mobile network of $N$ nodes with average density $d$ and a single token are $O(N(log(N/d)))$. \label{thm1}
\end{theorem}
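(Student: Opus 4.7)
The plan is to split the coverage process into a dense-unvisited ``bulk'' phase and a sparse-unvisited ``tail'' phase, using the threshold $z^{\ast} = \theta/d$ supplied by Lemma~\ref{lemma1} (taken with $h=1$ and some fixed constant probability $p$, say $p = 1/2$, so $\theta = \ln 2$). The bounds on the two quantities in the theorem will coincide, because each token transfer in the local-bias protocol carries constant message cost (an announcement, a constant number of unsuppressed request slots inside the window $T_r$, and a unicast hand-off), and correspondingly consumes constant wall-clock time in units of $T_r$. It therefore suffices to bound the expected number of token transfers.

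\emph{Phase 1 (bulk).} While the unvisited fraction $z > z^{\ast}$, Lemma~\ref{lemma1} guarantees that the token holder has at least one unvisited one-hop neighbor with probability at least $p$; under the local-bias rule the token is then forwarded to an unvisited neighbor, which becomes freshly visited. Each transfer is thus ``successful'' (decreases the unvisited count by one) with probability at least $p$, so the number of transfers needed to drop the unvisited count from $N$ to at most $\theta N/d$ is stochastically dominated by a sum of $(1 - \theta/d)N$ independent geometric random variables of parameter $p$, with expectation at most $(1/p)(1 - \theta/d)N = O(N)$.

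\emph{Phase 2 (tail).} Once $z \leq z^{\ast}$, at most $m \leq \theta N/d$ unvisited nodes remain; almost all one-hop neighborhoods are now fully visited, and the token performs (near-)canonical random-walk hops. Here I would invoke the random-walk mobility model cited in the system section: it keeps the spatial node distribution uniform in time, so a constant number of hops suffices to re-randomize the token holder's location to an effectively uniform sample over the $N$ nodes. Consequently the probability that a given hop lands on an unvisited node is $\Theta(m/N)$, and a coupon-collector-style summation gives the expected number of hops in the tail as
\[
\sum_{m=1}^{\theta N/d} \Theta\!\left(\frac{N}{m}\right) \;=\; \Theta\!\left(N\, H_{\theta N/d}\right) \;=\; O\!\left(N \log(N/d)\right),
\]
where $H_j$ is the $j$-th harmonic number.

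Adding the two phases gives expected total transfers $O(N) + O(N \log(N/d)) = O(N \log(N/d))$, and by the per-transfer accounting above this immediately yields the same bounds on cover time and on the number of messages. The main obstacle I anticipate is justifying the uniform-sampling claim used in Phase~2: on a \emph{fixed} geometric graph the cover time of a canonical random walk is $\Omega(N \log^2 N)$, worse than what we need, so the improvement to $O(N \log(N/d))$ must be extracted from mobility. The crux is showing that the timescale of node motion is short relative to the token-transfer timescale, so that between consecutive hops the token holder's neighbor set is effectively a fresh uniform sample, and the per-hop hitting probability really is $\Omega(m/N)$ rather than the much smaller quantity one might see if the remaining unvisited nodes were persistently clustered far from the walker.
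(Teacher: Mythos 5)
Your Phase 1 matches the paper's first stage exactly, and your final summation is numerically the same series the paper obtains (when $m$ unvisited nodes remain, the paper's per-node search cost $h^2d$ with $h^2 d \approx \theta N/m$ is precisely your $\Theta(N/m)$). The divergence — and the genuine gap — is in how you justify the $\Theta(m/N)$ per-hop hitting probability in Phase 2. Your claim that ``a constant number of hops suffices to re-randomize the token holder's location to an effectively uniform sample over the $N$ nodes'' cannot hold in this model: the deployment region has diameter $\Theta(\sqrt{N})$, each token transfer moves the token at most one communication range (a constant distance), and nodes move at constant speed, so after $O(1)$ hops the token holder is still within a constant distance of where it started. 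Global mixing of this walk takes $\Theta(N)$ hops, not $O(1)$, so the per-hop hitting probability of a distant cluster of unvisited nodes could be far smaller than $m/N$, and your coupon-collector sum is not justified as written. You correctly flagged this as the crux, but the resolution you hope for (mobility outrunning the token) is not available at constant node speeds.

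The paper closes this hole with a \emph{local} rather than a global argument, and the tool is already in Lemma~\ref{lemma1} taken with general $h$ rather than only $h=1$: because the unvisited set stays (approximately) uniformly spread, when a fraction $z=m/N$ of nodes remains unvisited there is w.h.p. an unvisited node within $h$ hops for $h^2 d \approx \theta N/m$, and the expected number of transfers to locate it is charged as the volume $h^2 d$ of that ball — no claim about where the token sits globally is needed. The paper then groups the nodes into stages by the radius $h$ needed to reach them (a fraction $\theta/((h-1)^2d)-\theta/(h^2d)$ of nodes each costing $h^2d$ transfers) and sums a harmonic-type series to get $O(N\log(N/d))$. If you replace your re-randomization step with this nearest-unvisited-neighbor bound, your Phase 2 becomes essentially the paper's argument and the rest of your write-up (constant cost and constant time per transfer, hence identical bounds for cover time, transfers, and messages) goes through.
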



\begin{proof} 
From Lemma~\ref{lemma1}, we note that the expected number of unvisited neighbors remains greater than $1$ as long as $z>\theta/d$. Thus for a fraction $(1- \theta/d)$ of the nodes, the expected distance traveled by a token is $1$.

Once the fraction of visited nodes exceeds $(1-\theta/d)$, a slowdown is expected because the token might be randomly traversing an area of already visited nodes. Now, note that there are $4d$ nodes within a $2$ hop distance of a token holder. As long as one of these nodes is unvisited, the token will find it with an expected visitation time of $4d$.  Thus for a fraction $\theta/d-\theta/4d$ of the nodes, the expected distance traveled by a token is $4d$. Continuing up to a maximum search area of $zd$, where $zd=N$, the total distance traversed by a token before visiting all nodes and the expected time for complete coverage is given by the following expression:

\vspace{-6mm}
\begin{align*}
&=O((N-\frac{N}{d})+(\frac{N}{d}-\frac{N}{4d})4d + .. + (\frac{N}{(\sqrt{z}-1)^2)}-\frac{N}{z})z) \\
&=O(N-\frac{N}{d}+N(3+5/4+ .. +(2\sqrt{z}+1)/(\sqrt{z}-1)^2 )) \\
&=O(N-N/d+ N.\sum\limits_{i=1}^{\sqrt{z}-1}(2i+1)/i^2)  \\
&=O(N+ N.\sum\limits_{i=1}^{\sqrt{z}-1}(2)/(i)+ N.\sum\limits_{i=1}^{\sqrt{z}-1} (1)/(i^2)) \\
&=O(N(1+log(\sqrt{z}-1)) ~~ \texttt{\{Euler harmonic series approx.\}} \\
&=O(N(1+log(N/d)))   \\
&=O(N(log(N/d)))   
\end{align*}

\normalsize

Thus, we have shown that both expected convergence time and the expected number of token transfers in a random walk with local bias in a connected, mobile network of $N$ nodes with average density $d$ and a single token are $O(N(log(N/d)))$.
\end{proof}

\begin{Corollary}
Both the expected convergence time and the average number of transfers per token in random walks with local bias in a connected, mobile network of $N$ nodes with average density $d$ and $k$ tokens are $O((N/k)log(N/d))$. \label{cor1}
\end{Corollary}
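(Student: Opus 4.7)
The plan is to lift Theorem~\ref{thm1} to $k$ tokens by arguing that the tokens share the exploration work roughly evenly and operate in parallel, so that both the total transfer count (which governs the per-token average) and the wall-clock cover time inherit a $1/k$ speedup.

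First, I would re-use the uniform-distribution machinery underlying Lemma~\ref{lemma1}. Since the $k$ tokens are initiated at uniformly random locations and circulate via random walks on a mobility model that itself preserves uniform node density, the tokens remain approximately uniformly distributed throughout execution. Consequently, the threshold analysis in the proof of Theorem~\ref{thm1} --- which determined when a token must widen its search from one hop to two hops, three hops, and so on --- carries over unchanged, because the thresholds $\theta/d,\, \theta/(4d),\, \dots$ depend only on the global unvisited fraction and not on how many tokens are searching. I would then reinterpret each summand in the geometric series of Theorem~\ref{thm1} as a count of hop-transfers needed to drive the unvisited fraction from one threshold to the next. Because this count depends only on the remaining unvisited population, the total number of transfers summed over all $k$ tokens is still $O(N\log(N/d))$; dividing by $k$ yields the per-token bound $O((N/k)\log(N/d))$. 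For the cover time, note that token transfers happen in parallel across the $k$ tokens, so the wall-clock time needed to perform the $O(N\log(N/d))$ total transfers is reduced by the same factor $k$, giving $O((N/k)\log(N/d))$.

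The main obstacle is inter-token interference: two tokens sitting close to one another could compete for the same unvisited neighbor, or could cover overlapping territory, both of which would in principle erode the $k$-fold speedup. My plan is to bound this loss using the uniform-distribution property again: the expected number of other tokens within any constant-radius disk is $O(kd/N)$, which is $o(1)$ as long as $k = o(N/d)$, so the probability that a given transfer is wasted on contention is vanishingly small and can be absorbed into the hidden $O(\cdot)$ constant. The delicate part is verifying that this non-interference property still holds deep in the slowdown regime, where tokens wander over large already-visited regions before finding fresh territory and the assumption of independent per-token progress is weakest.
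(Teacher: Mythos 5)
Your proposal is correct and takes essentially the same approach as the paper: the paper's entire proof is the one-line observation that with $k$ tokens each token is on average responsible for a region of $N/k$ nodes, so the bound of Theorem~\ref{thm1} is effectively divided by $k$. Your version is in fact more careful than the paper's, since you also justify the preservation of the uniform-distribution assumption and explicitly bound inter-token interference, neither of which the paper addresses.
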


\begin{proof}
Note that when $k$ tokens are used for visiting all nodes, each token is on average responsible for an area of $N/k$, from which the result follows.
\end{proof}

Using the above Corollary, we observe that with $\sqrt{N}$ tokens, the expected cover time is $O(\sqrt{N}(log(N/d)))$. When $log(N)$ tokens are used, the expected convergence time is $O((N/log(N))log(N/d))$, i.e., $O(N)$ in terms of network size $N$.

\subsection{Census with gradient bias}

\begin{theorem}
Both expected cover time and the expected number of token transfers in {\em Census} with gradient bias in a connected, mobile network of $N$ nodes with average density $d$ and a single token are $O(N(1+1/d))$. \label{thm2}
\end{theorem}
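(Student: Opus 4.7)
The plan is to reuse the shell decomposition idea from Theorem~\ref{thm1}, but exploit the key advantage of gradient bias: once the token is surrounded by visited one-hop neighbors, the temporary multi-hop gradient described in Section~3.2 pulls it toward the nearest unvisited node along a monotone descending level path. Consequently, reaching the next unvisited node from hop-distance $i$ costs only $O(i)$ token transfers with gradient bias, instead of the $O(i^2 d)$ area-sweeping cost incurred by local bias alone. This shaves a $\log(N/d)$ factor off the series.

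First I would apply Lemma~\ref{lemma1} to partition the Census run into shells indexed by $i\ge 1$. Shell $1$ is the phase where the fraction of unvisited nodes $z$ satisfies $z > \theta/d$; the lemma then guarantees an unvisited one-hop neighbor, so each visitation costs exactly $1$ transfer, and the shell contains $N(1-\theta/d)$ visitations. For $i\ge 2$, shell $i$ is the phase with $\theta/(i^2 d) < z \le \theta/((i-1)^2 d)$, during which the nearest unvisited node is within $i$ hops. In this shell the token climbs the gradient in $i$ transfers, and the number of visitations is $N\theta/d \cdot \bigl(1/(i-1)^2 - 1/i^2\bigr)$.

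Summing these contributions gives
\[
T(N) \;=\; N\bigl(1-\theta/d\bigr) \;+\; \sum_{i=2}^{O(\sqrt{N/d})} i \cdot \frac{N\theta}{d}\left(\frac{1}{(i-1)^2}-\frac{1}{i^2}\right).
\]
Reindexing with $j=i-1$ and expanding $(j+1)/j^2 = 1/j + 1/j^2$ rewrites the inner sum as $\sum_{j\ge 1}\bigl(1/j + 1/j^2 - 1/(j+1)\bigr)$: the $1/j$ and $1/(j+1)$ terms telescope, and the residual $\sum 1/j^2$ converges to $\pi^2/6$. Hence the series is $O(1)$, yielding $T(N) = O(N) + O(N/d) = O\bigl(N(1+1/d)\bigr)$, and since each transfer takes constant time, the cover time obeys the same bound.

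The main obstacle, and the part that deserves careful justification, is the claim that a gradient traversal really costs only $O(i)$ transfers per unvisited node found. I would argue this in three steps: (a) by the gradient setup rule, levels strictly decrease with hop distance away from an unvisited source, so no monotone climber can become trapped; (b) by the Census forwarding rule, the token holder always hands the token to the highest-level requesting neighbor, which is precisely one step up the gradient; (c) by the refresh rule, gradient timers scale with level, so nodes near the source (with level close to $1$) persist long enough for the climbing token to reach them before their state is reset. These properties together ensure that the cost of exiting a visited island of radius $i$ is proportional to $i$, closing the argument.
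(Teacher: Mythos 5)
Your proposal follows essentially the same route as the paper's proof: the identical shell decomposition via Lemma~\ref{lemma1} (unit-cost transfers while $z>\theta/d$, then cost $i$ per visitation when the nearest unvisited node is within $i$ hops), followed by summing the series $\sum_i i\bigl(1/(i-1)^2-1/i^2\bigr)$, which the paper likewise reduces to $N+ (N/d)\sum 1/i^2 = O(N(1+1/d))$. Your explicit telescoping of the series and your three-step justification of the $O(i)$ gradient-climb cost are slightly more detailed than the paper's treatment, but they do not change the argument.
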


\begin{proof} 
Similar to the analysis in Theorem~\ref{thm1}, for a fraction $(1- \theta/d)$  of the nodes, the expected distance traveled by a token is $1$. However, once the fraction of visited nodes exceeds $\theta/d$, the gradients will be used to pull the token towards unvisited nodes. Now, note that there are $4d$ nodes within a 2 hop distance of a token holder. As long as one of these nodes is unvisited, the token will be pulled towards that node. Thus for a fraction $\theta/d-\theta/4d$ of the nodes, the expected distance traveled by a token is 2.

Continuing up to a maximum distance of $\sqrt{N}$, the total average distance traversed by a token before visiting all nodes and the average time for complete coverage is given by the following expression:

\vspace{-6mm}
\begin{align*}
&=O((N-\frac{N}{d})+(\frac{N}{d}-\frac{N}{4d})2 + .. + (\frac{N}{(\sqrt{N}-1)^2 d)}-\frac{N}{Nd})\sqrt{N}) \\
&=O(N+\frac{N}{d}(1+1/4+1/9+...+1/N)) \\
&=O(N+\frac{N}{d}\sum\limits_{i=1}^{\sqrt{N}}(1/i^2))  \\
&=O(N(1+1/d))    
\end{align*}                                                     

\normalsize

Thus, we have shown that both expected convergence time and the expected number of token transfers in {\em Census} with gradient bias in a connected, mobile network of $N$ nodes with average density $d$ and a single token are $O(N(1+1/d))$.
\end{proof}

In comparison to Theorem~\ref{thm1}, we note a speed up by a factor of $log(N)$. Also, we will show later via a simulation study that the long tail in the convergence of random walk with local bias is eliminated by the use of gradients, resulting in significant speedup. 

Similar to Corollary~\ref{cor1}, we note that with $k$ tokens, the cover time for Census with gradient bias is $O(N/k(1+1/d))$. Thus, with $\sqrt{N}$ tokens, the expected cover time is $O(\sqrt{N}(1+1/d))$. When $log(N)$ tokens are used, the expected cover time is $O(N/log(N)(1+1/d))$.

\begin{theorem}
The expected gradient message overhead in Census with gradient bias in a connected, mobile network of $N$ nodes with density $d$ and $k$ tokens is $O((N/k)log(N/d))$. \label{thm3}
\end{theorem}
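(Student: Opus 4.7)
I will mirror the stage-based decomposition used in the proof of Theorem~\ref{thm2}, pairing each stage of coverage with the gradient broadcasts required to construct the gradient that guides a token through that stage. Partition the execution into stages indexed by $h$, where stage $h$ is the regime in which the minimum gradient radius to reach an unvisited node is $h$ hops: by Lemma~\ref{lemma1}, this is the range of unvisited fractions $z$ with $\theta/(h^2d) < z \le \theta/((h-1)^2 d)$ for $h\ge 2$, while stage $h=1$ covers the initial $(1-\theta/d)$ fraction and incurs no gradient cost. The number of new nodes visited in stage $h$ is therefore $N(\theta/d)\bigl(1/(h-1)^2 - 1/h^2\bigr)$, exactly the decomposition already used in Theorem~\ref{thm2}.

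Next I would bound the amortized gradient message cost of one new visit in stage $h$ by $O(h^2 d)$. This rests on two observations: (i) a gradient emanating from an unvisited node is rebroadcast only at level-$0$ nodes and only once per node, so gradients from distinct unvisited nodes partition the visited region and each gradient generates exactly as many broadcasts as nodes it covers, which is $O(h^2 d)$ for a disk of radius $h$ hops at density $d$; and (ii) under the level-proportional refresh rule, the lifetime of a gradient is of the same order as the time a token needs to climb it, so each newly visited node is charged for only $O(1)$ gradient setups.

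Summing over stages then gives the bound. The contribution of stage $h$ is $N(\theta/d)\bigl(1/(h-1)^2 - 1/h^2\bigr)\cdot h^2 d = \Theta(N/h)$, so the total overhead is $\sum_{h=2}^{H_{\max}}\Theta(N/h)$; taking $H_{\max}=O(\sqrt{N/d})$ (the radius at which $z_{\text{final}}\approx 1/N$, so that the last unvisited node is reachable) and invoking the harmonic-series estimate of Theorem~\ref{thm1} yields $O(N\log(N/d))$. The multi-token case is handled exactly as in Corollary~\ref{cor1}: with $k$ tokens each responsible on average for an $N/k$ portion of the network, the per-token overhead, and hence the expression stated, is $O((N/k)\log(N/d))$.

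The hardest step will be the amortization in (ii). A rigorous version would need to combine the level-proportional refresh timer with a bound on how long a token takes to climb a gradient of radius $h$, showing that refreshes do not occur much more often than climbs. I expect this to enter only through the hidden constant, preserving the $O((N/k)\log(N/d))$ bound asymptotically, but it is the only step that does not follow directly from Lemma~\ref{lemma1} and the stage decomposition of Theorem~\ref{thm2}.
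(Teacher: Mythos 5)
Your proposal follows essentially the same route as the paper's proof: the same Lemma~\ref{lemma1}-based stage decomposition by gradient radius $h$, the same per-stage gradient cost of $h^2 d$ messages per newly visited node (the paper writes the $h=2$ term as cost $4d$ for a fraction $\theta/d-\theta/4d$ of nodes, and the final term as cost $pd=N/k$), and the same harmonic-series summation borrowed from Theorem~\ref{thm1}. The only difference is that you explicitly flag and sketch the amortization over gradient refreshes (your observation (ii)), which the paper leaves entirely implicit, so if anything your version is slightly more careful than the original.
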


\begin{proof}
Following the lines of Theorem~\ref{thm1}, we note that for a fraction $(\theta/d-\theta/4d)$ of the nodes, the average gradient set up cost will be $4d$.  And, for a fraction $(1/((\sqrt{p}-1)^2 d)-1/pd)$ of the nodes, the gradient setup cost will be $pd$, where $pd=N/k$. The result then follows from summing up the series as shown in the proof of Theorem $4.2$. 
\end{proof}

Thus, we note that when using gradient bias, there is an extra overhead to pull the tokens towards unvisited nodes, but this is compensated by reduction in the number of required token transfers and reduction in convergence time. Moreover, the gradient message overhead decreases linearly with number of tokens.

\subsection{Comparison with other techniques for aggregation}
\label{sec:comp}

\subsubsection{Structured data aggregation} 

The problem of computing aggregates in ad-hoc networks is a well-studied one, especially for static sensor networks. Solutions include Directed Diffusion \cite{directdiff}, Collection Tree Protocol \cite{ctp}, Sprinkler \cite{naik_sprinkler} and TAG \cite{tag}. However, in mobile networks and networks with frequent link changes, topology driven structures are likely to be unstable and incur a high communication overhead for maintenance.  It has been observed that routing protocols for MANETs such as OLSR \cite{rtsw3} are unable to scale beyond $100$ - $150$ nodes \cite{rtsw1}. In a recent paper \cite{rtswarxiv}, we analyzed reasons for this {\em scaling wall} in MANETs. Firstly, as network size increases, paths are likely to fail more often - the median path connectivity interval in the network falls as $O(\sqrt{N})$, where $N$ is the network size. Secondly, we observe that structure based approaches are not robust to motion models and node speeds because even small changes to node speed significantly increase the frequency of link changes and hence increase the path failure rate and instability. Finally, in order to be successfully used for routing, the broken paths need to be fixed much faster than the rate at which they break. This involves fast link estimation for discovering broken paths and then repairing those paths, both of which incur high message overhead. Not surprisingly, none of these protocols have been successfully adapted or evaluated for MANETs. On the other hand, we observe that {\em Census} is able to scale to several thousand of nodes in highly mobile networks. We also find that its performance is largely unaffected by motion model and node speeds (as shown later in Fig.~\ref{res7} and Fig.~\ref{speedtrialsab}). 

\subsubsection{Flooding and Gossip}

A structure free approach such as flooding data from all nodes to every other node has a messaging cost of $O(N^2)$, and is not any faster than {\em Census}. Alternatively, for problems such as average consensus, one could use multiple rounds of local gossip where in each round a node averages the current state of all its neighbors and this procedure is repeated until convergence \cite{gossip-survey, randgossip}. However, this method requires several iterations and has also been shown to have a communication cost and completion time of $O(N^2)$ for convergence in grids or random geometric graphs, where connectivity is based on locality \cite{rabbatgossip}. Census has a communication cost of $O(Nlog(N)/k)$, and it can be used for applications beyond just averaging.

A variant of flooding and gossip for general aggregation problems is a {\em diffusion}-like approach where the initiating node broadcasts an $N$-bucket register (one for each node in the network). Each time a node receives this register, it adds its own state into the register (if it's not already added) and rebroadcasts the register if it learned about any new node in this iteration. This process continues until all nodes have complete copies of the $N$-bucket register. Note that the size of each message in this technique is $O(N)$ and therefore the messaging cost is at least $\Omega(N^2)$. Moreover, this technique assumes that the ids of all the nodes in the network are known a priori. On the other hand, {\em Census} does not assume any knowledge about the nodes in the network and has a much lower communication cost.

\section{Evaluation}
In this section, we quantitatively evaluate the convergence characteristics and performance of {\em Census} under different network conditions. More specifically, in Section~\ref{sec:conv}, we quantify the convergence characteristics of {\em Census} and compare with those of pure random walks and random walks with local bias; our results characterize the improvement obtained by the gradient bias in {\em Census} by mitigating the long tail in cover time. And in Section~\ref{sec:prop}, we evaluate the message overhead and cover time for {\em Census} under different network conditions such as densities, number of tokens, mobility models, and speeds, and compare this with random walks that have only local biasing. 

For simulations of {\em Census} using ns-3, we set up MANETs ranging from 125 to 4000 nodes with varying number of tokens that are initiated at random location within the network. Nodes are deployed uniformly in the network and the deployment area and communication range are chosen such that the average neighborhood size ($d$) remains constant irrespective of network size. We test with $d=7$, $10$, and $13$ in our simulations with the following mobility models: 2-d random walk, random waypoint and Gauss-Markov. The average node speeds range from $3$ to $15$ m/s.

\subsection{Convergence characteristics}
\label{sec:conv}

\begin{figure*}[htbp]
\vspace*{-3mm}
  \begin{center}
    \mbox{
      \subfigure[] {\scalebox{0.35}{\includegraphics[width=\textwidth]{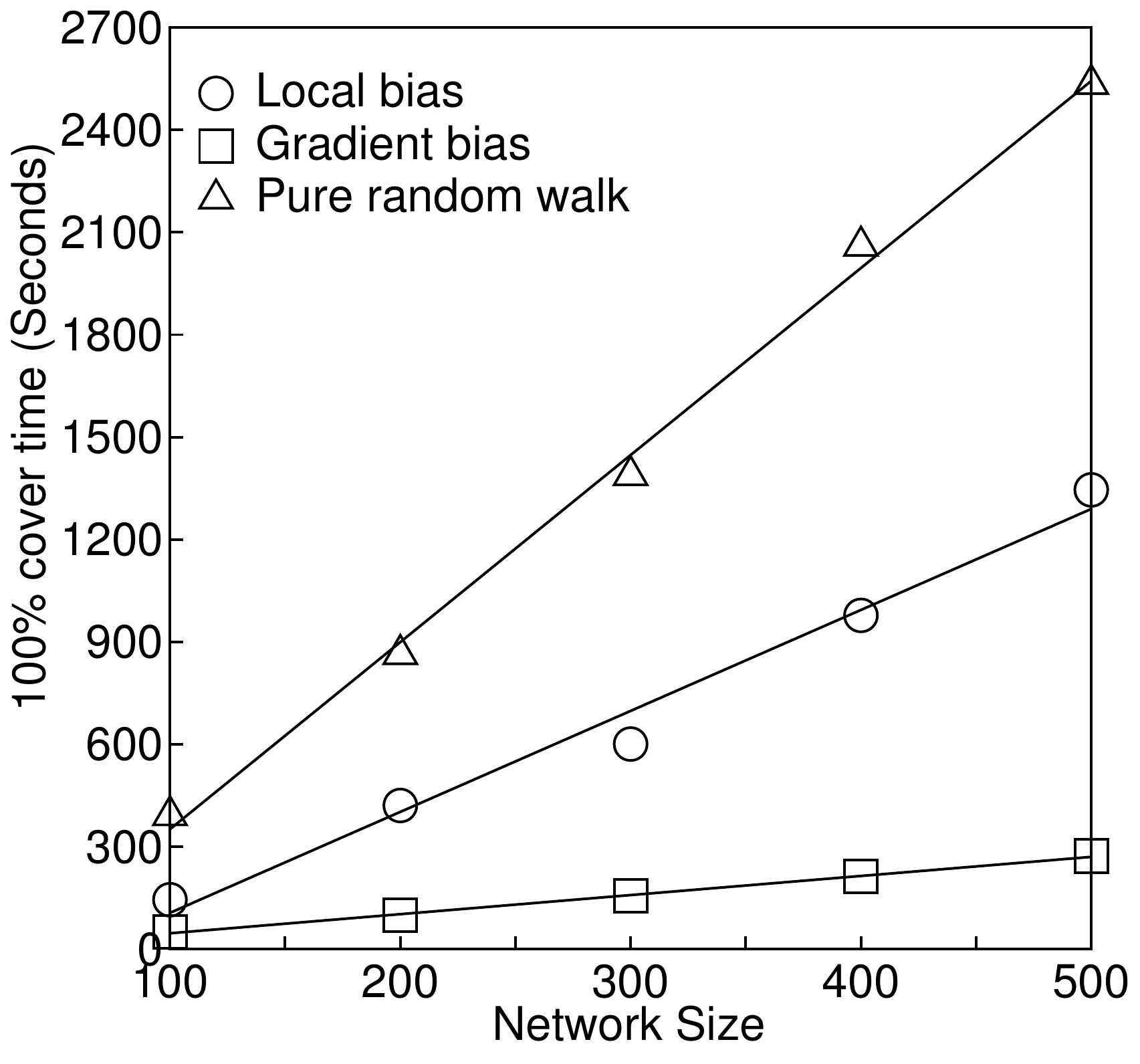}} \label{fig:1a}} \quad
      \subfigure[] {\scalebox{0.35}{\includegraphics[width=\textwidth]{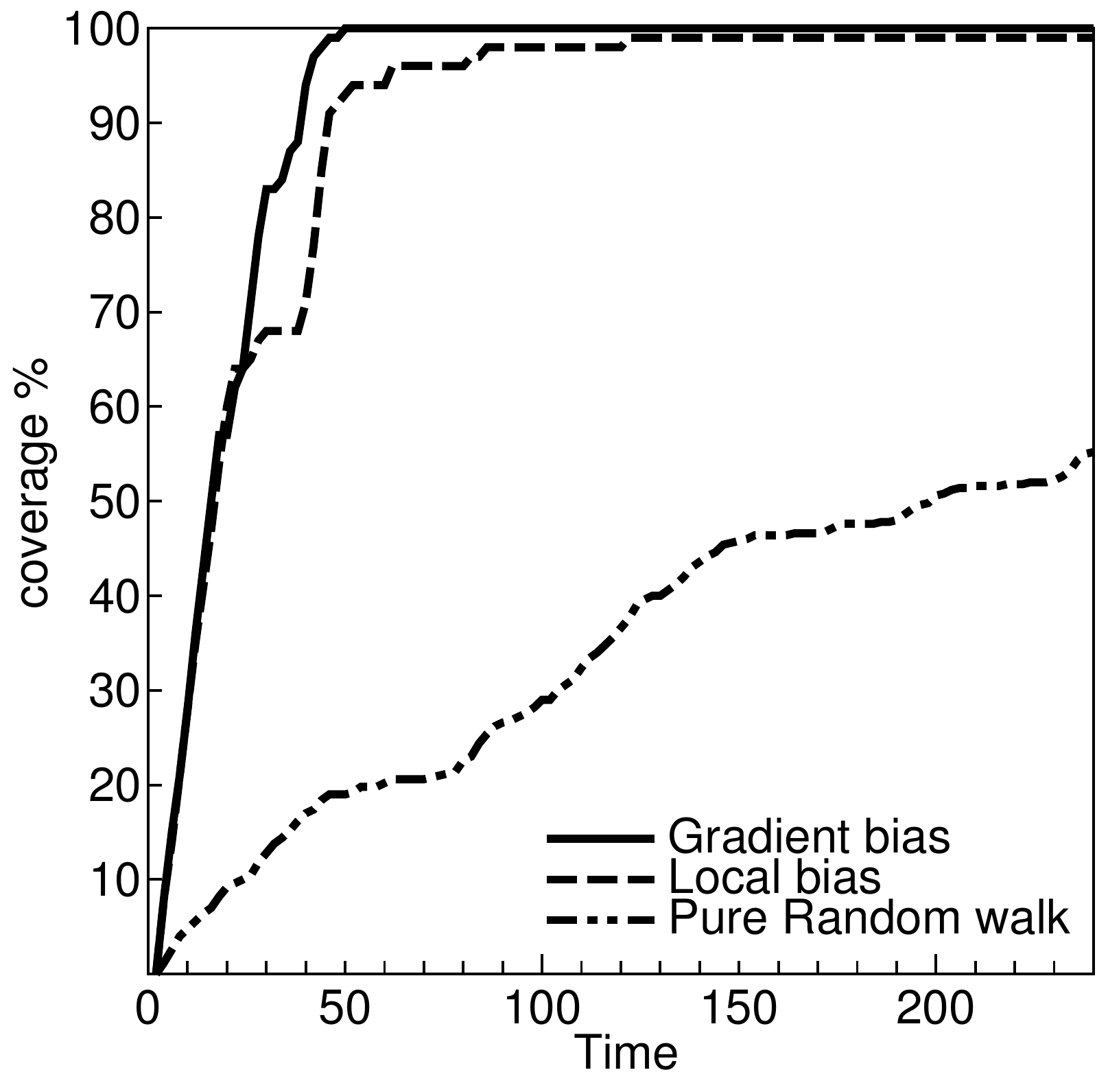}} \label{fig:1b}} \quad
      
      }
    \vspace{-2mm}  
    \caption{(a) Cover time vs network-size (single token) (b) Convergence pattern on a network of 500 nodes for a given trial with and without bias }
       \label{res1}
  \end{center}
\vspace*{-2mm}
\end{figure*}

\subsubsection{Comparison of pure, local bias and gradient bias random walks}

Here, we compare the convergence characteristics of pure random walks with that of local and gradient bias. We have used network sizes of $100-500$ and a single token in a random 2D-walk mobility model where the nodes move in a certain direction for a fixed distance and then choose a new random direction. The $100\%$ cover time are shown in Fig.~\ref{fig:1a}. We observe that local biasing is about two times faster than pure random walks and  gradient bias is about $12-14$ times faster than pure random walks.  

Notably, local biasing is actually much faster than pure random walks from the beginning until a major portion of the network is covered, but a slowdown happens towards the tail. This difference in convergence characteristics is illustrated in Fig.~\ref{fig:1b}, where we show the convergence pattern in a single randomly chosen trial of $500$ nodes for pure random walk, local and gradient biased Census. In this particular trial, we observe that until around the $65\%$ mark, local bias proceeds on par with gradient bias, but then slows down slightly. This is because, until this point local biasing enables a token to find an unvisited node directly and there are very few wasted explorations. A more pronounced slowdown for local biasing is noticed around $80\%$, whereas a pure random walk is slow throughout. After this point, there are many wasted explorations in finding unvisited nodes when only local biasing is used, whereas {\em Census} with gradient bias proceeds at a uniform rate throughout. 

\begin{figure*}[htbp]
\vspace*{-3mm}
  \begin{center}
    \mbox{
      \subfigure[] {\scalebox{0.35}{\includegraphics[width=\textwidth]{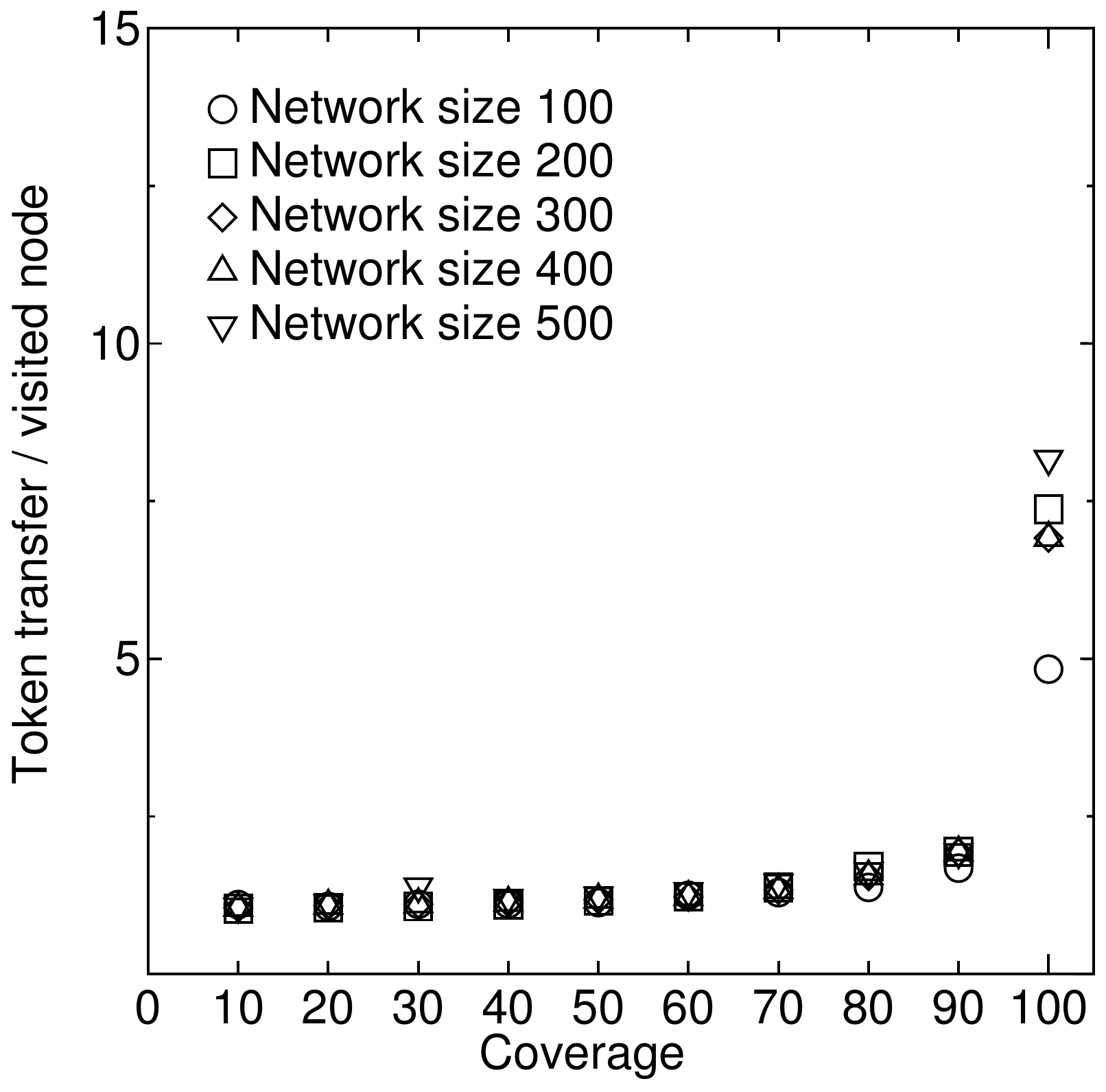}} \label{fig:2a}} \quad
      \subfigure[] {\scalebox{0.35}{\includegraphics[width=\textwidth]{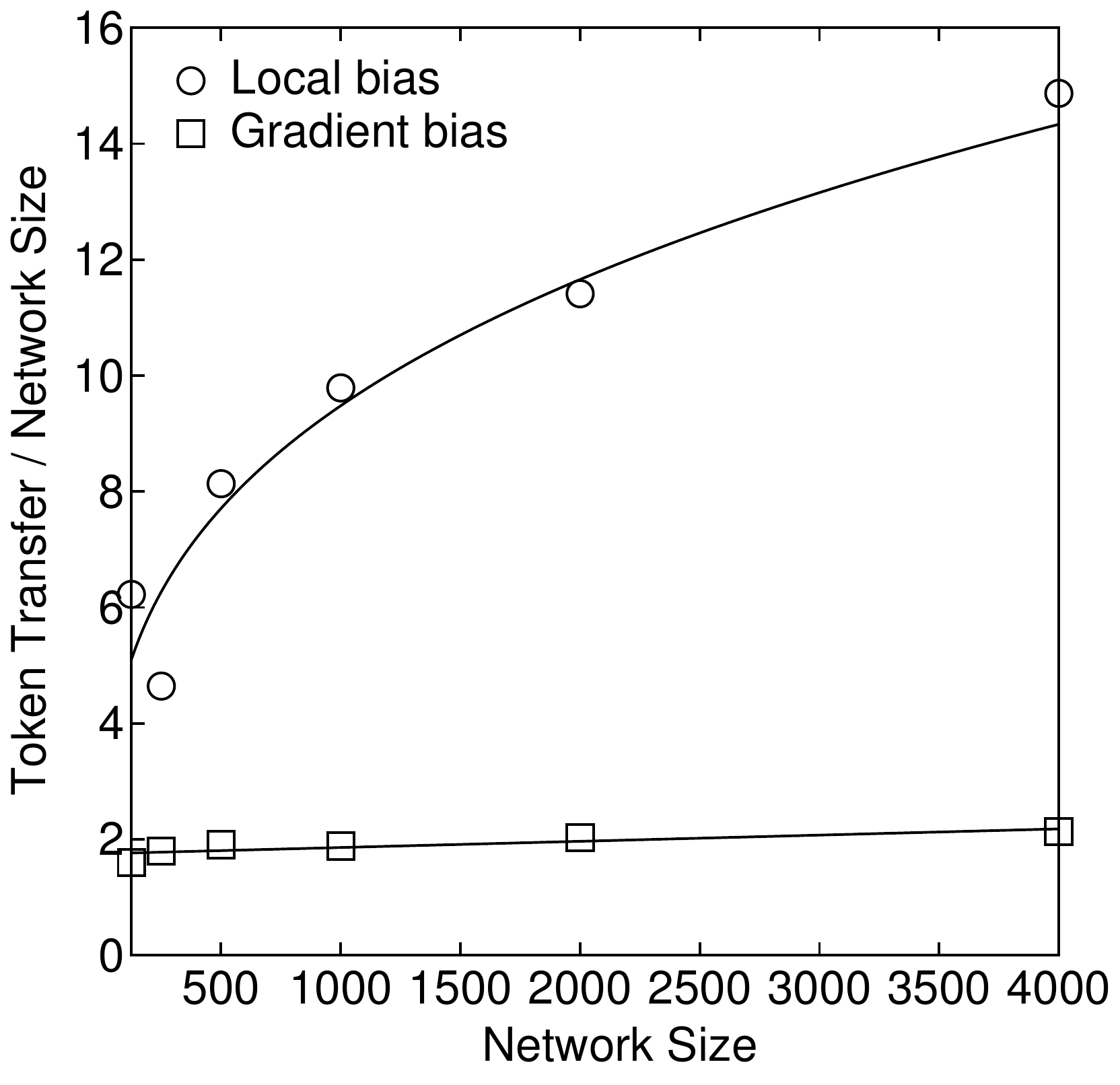}} \label{fig:4}} \quad
      
      }
    \vspace{-2mm}  
    \caption{(a) Token passing overhead as a function of coverage percentage (local bias) (b) Exploration overhead vs network size }
       \label{res2}
  \end{center}
\vspace*{-2mm}
\end{figure*}

We analyze this effect of {\em exploration overhead} for random walks with local bias in more detail in Fig.~\ref{fig:2a} which plots the ratio of token transfers to the number of visited nodes at different stages of coverage for random walks with local bias (averaged over multiple trials). This ratio captures the wasted explorations where a token is repeatedly passed to already visited nodes. We see that with local bias, the ratio stays low ($<2$) until about the $70-80\%$ mark and then starts rising rapidly.  The observation matches our result in Lemma~\ref{lemma1}. But the token passing overhead for {\em Census} remains close to $2$ throughout without any sharp rise. 


\subsubsection{Exploration overhead as function of network size}

In Fig.~\ref{fig:4}, we compare the ratio of token transfers to the visited nodes at different network sizes with $100\%$ coverage. We observe that for local bias, this ratio grows as $log(N)$, while for the gradient bias it is almost flat, matching our results in Theorem~\ref{thm1} and Theorem~\ref{thm2}. Redundant token passes are very low with gradient bias and the ratio stays close to $2$ even for a $4000$ node network. The exploration overhead factor indicates the number of token passing transactions required for coverage and is therefore a more accurate representation of convergence characteristics than the absolute cover time which is quite implementation specific. For instance, in our implementation each transaction (i.e., each iteration of token announcement, token requests and token passing) took on average $250ms$. But this number could be much smaller using methods such as \cite{collabcounting} that use collaborative communication for estimating neighborhood sizes that satisfy given predicates.

\subsection{Census properties under different network conditions}
\label{sec:prop}

In this section, we evaluate the message overhead and cover time for Census (as well as local bias) under different densities, number of tokens, mobility models, and speeds.

\subsubsection{Message overhead}

\begin{figure}[t]
  \begin{center}
    \includegraphics[width=.35\textwidth]{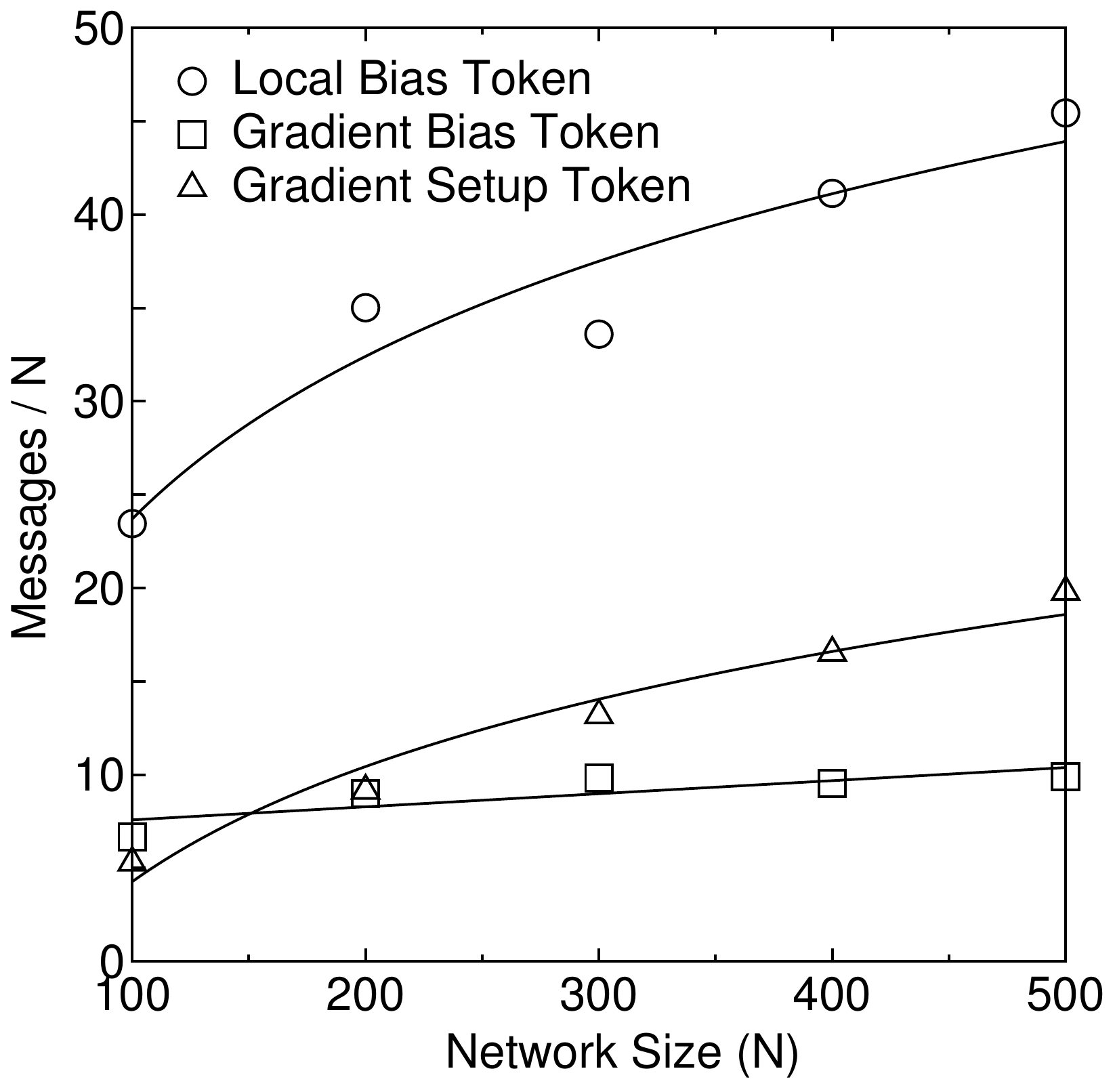}
    \caption{Analysis of message overhead for Census and local bias}
    \label{fig:2b}
  \end{center}
\end{figure}

In Fig.~\ref{fig:2b}, we compare the message overhead of Census and local bias. To highlight the log factor overhead, we show messages normalized by network size on y-axis and network size on x-axis. The values for token messages with gradient bias stay constant indicating that the token messages grow linearly with network size. The curve for local-bias token messages and the gradient setup messages grow as $log(N)$, indicating a $Nlog(N)$ messaging overhead. Note that the gradient setup cost is compensated by a significant reduction in the required token messages in Census as compared to the local bias scheme. These results show that Census achieves superior performance both in terms of cover time and message overhead over that of local bias, despite the use of short multi-hop gradients. 

\subsubsection{Multiple tokens}

First, we quantify the impact of using $\sqrt{N}$ and $\log(N)$ tokens. Fig.~\ref{fig:3a} shows the cover time with $\sqrt{N}$ tokens in the network for Census and local bias. The network sizes that we simulate are 125, 250, 500, 1000, 2000 and 4000. The corresponding number of tokens used in the network is 11, 15, 22, 31, 42 and 62 respectively. We observe that the coverage time grows only as $O(\sqrt{N})$, matching our analysis. In Fig.~\ref{fig:3b}, we show the impact of using $log_2(N)$ tokens. The network sizes that we simulate are 125, 250, 500, 1000, 2000 and 4000. The corresponding number of tokens used in the network is 7, 8, 9, 10, 11 and 12 respectively. Here, the trend is observed to be linear. 

\begin{figure*}[h]
\vspace*{-3mm}
  \begin{center}
    \mbox{
      \subfigure[] {\scalebox{0.35}{\includegraphics[width=\textwidth]{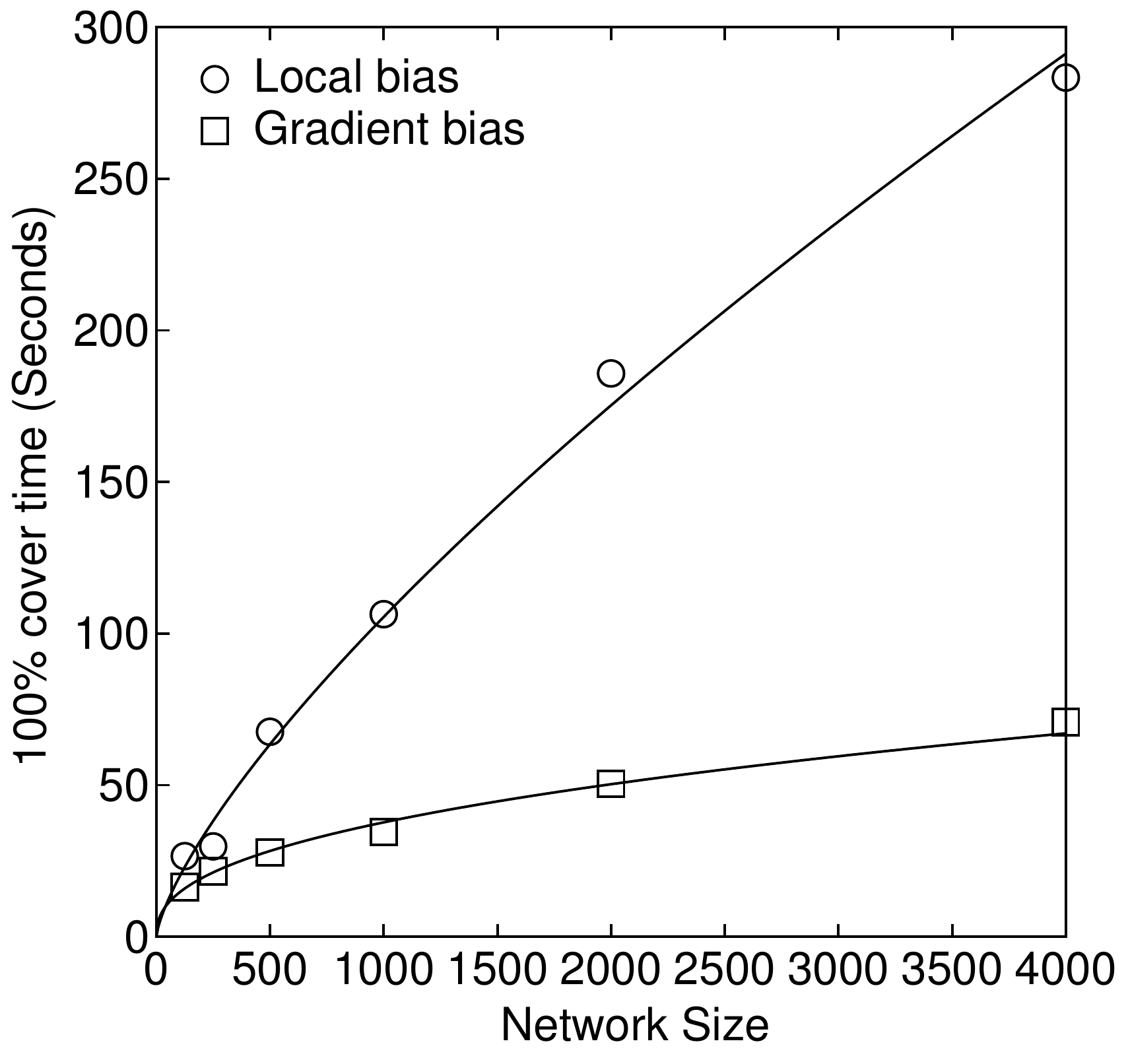}} \label{fig:3a}} \quad
      \subfigure[] {\scalebox{0.35}{\includegraphics[width=\textwidth]{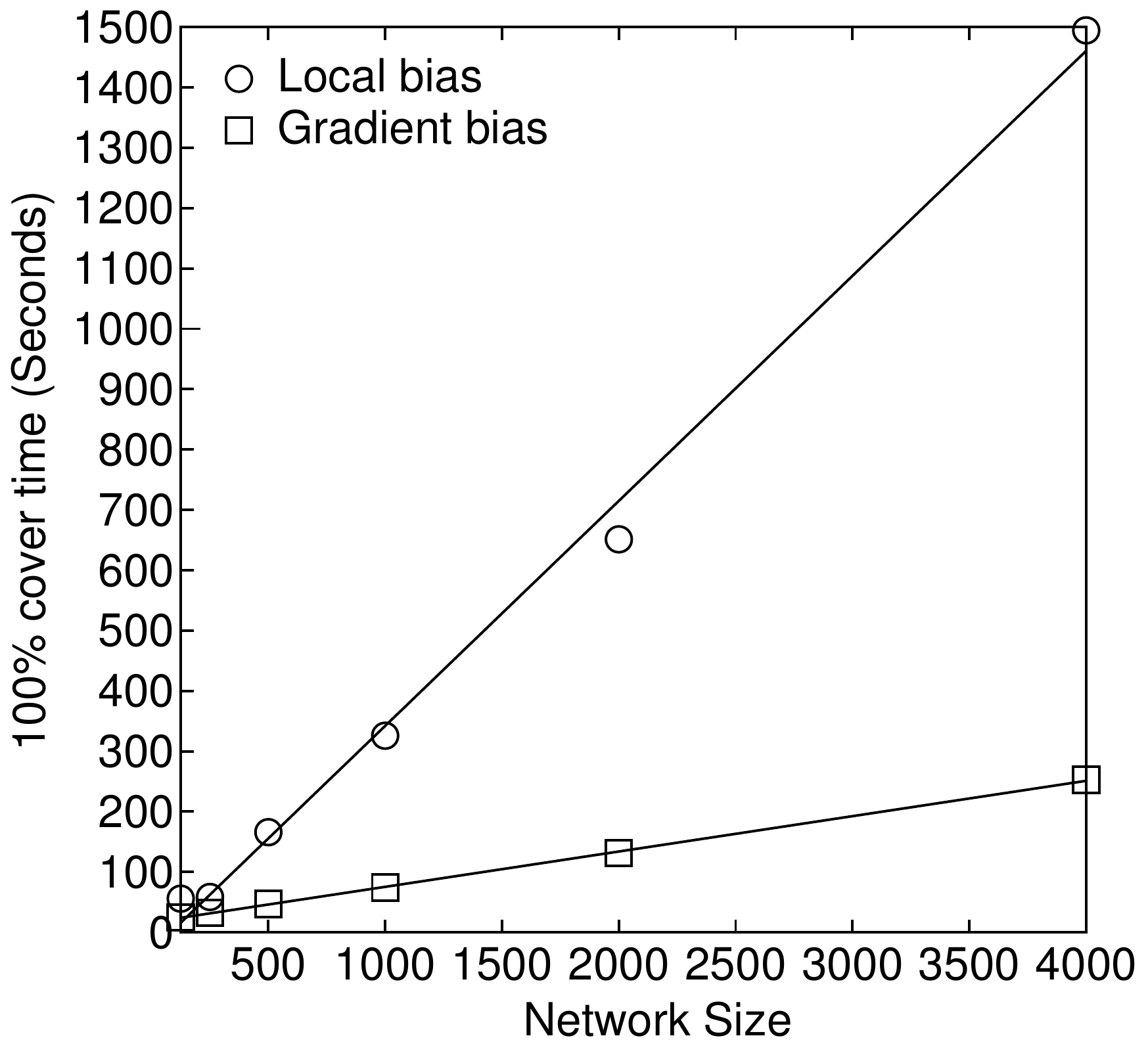}} \label{fig:3b}} \quad
      
      }
    \vspace{-2mm}  
    \caption{(a) Cover time vs network-size ($\sqrt{N}$ tokens) (b) Cover time vs network-size ($\log{N}$ tokens) }
       \label{res3}
  \end{center}
\vspace*{-2mm}
\end{figure*}

\begin{figure*}[h]
\vspace*{-3mm}
  \begin{center}
    \mbox{
      \subfigure[] {\scalebox{0.32}{\includegraphics[width=\textwidth]{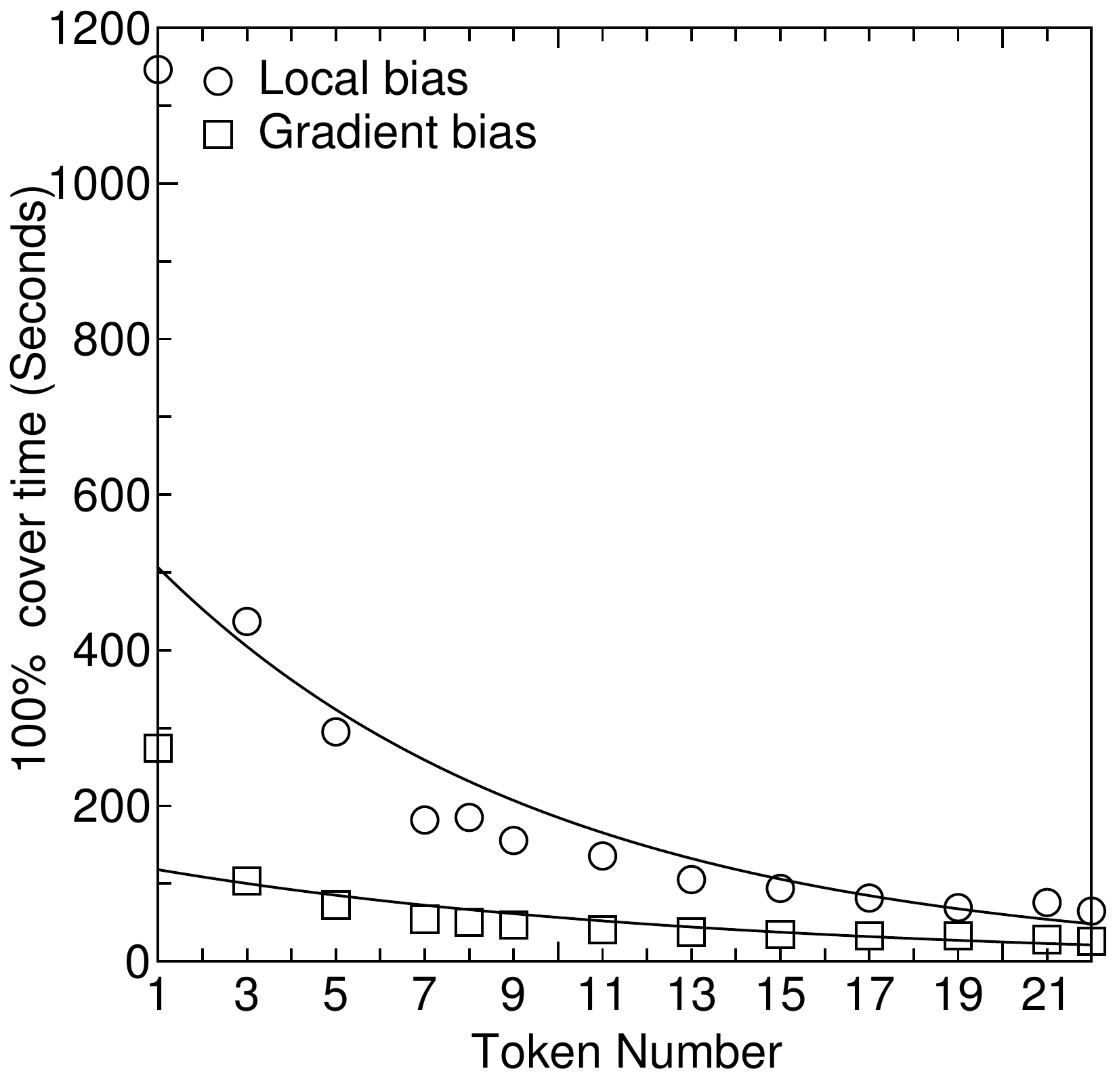}} \label{fig:5a}} \quad
      \subfigure[] {\scalebox{0.38}{\includegraphics[width=\textwidth]{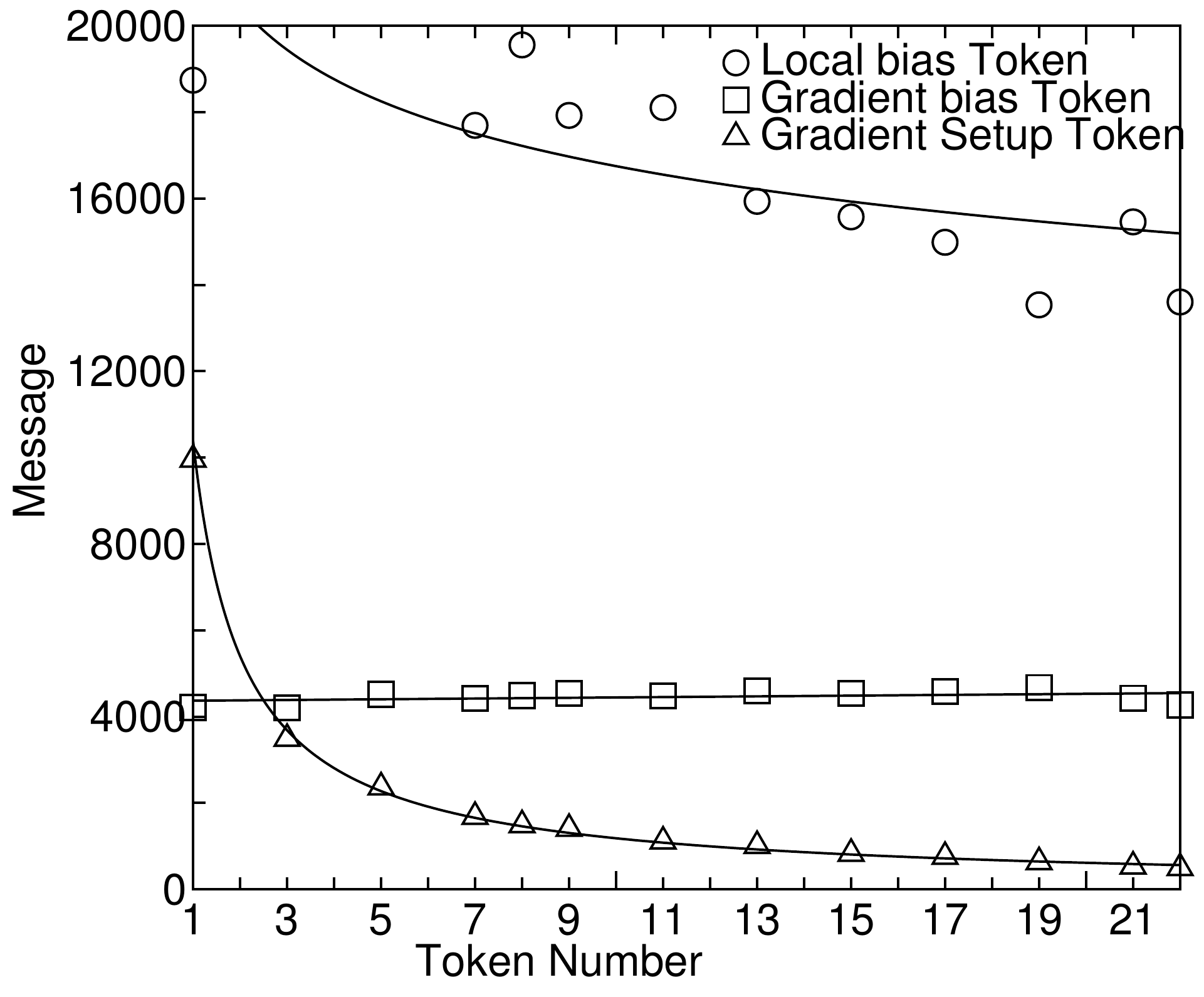}} \label{fig:5b}} \quad
      
      }
    \vspace{-2mm}  
    \caption{(a) Cover time vs number of tokens ($N=500$) (b) Messages vs number of tokens ($N=500$)}
       \label{res5}
  \end{center}
\vspace*{-2mm}
\end{figure*}

Next, in Fig.~\ref{fig:5a}, we analyze the cover time as a function of number of tokens. The network size is $500$ and the number of tokens are varied from $1$ to $22$. We notice that the cover time decreases linearly with the number of tokens, matching our analysis. In Fig.~\ref{fig:5b}, we compare the messaging overhead for the same scenario. For Census, we see that the gradient message overhead decreases linearly with the number of tokens. The token passing overhead for gradient bias stays roughly constant, because there is very little overhead to begin with. The token passing overhead for local bias decreases linearly with number of tokens.


\subsubsection{Impact of density}

\begin{figure*}[htbp]
\vspace*{-3mm}
  \begin{center}
    \mbox{
      \subfigure[] {\scalebox{0.37}{\includegraphics[width=\textwidth]{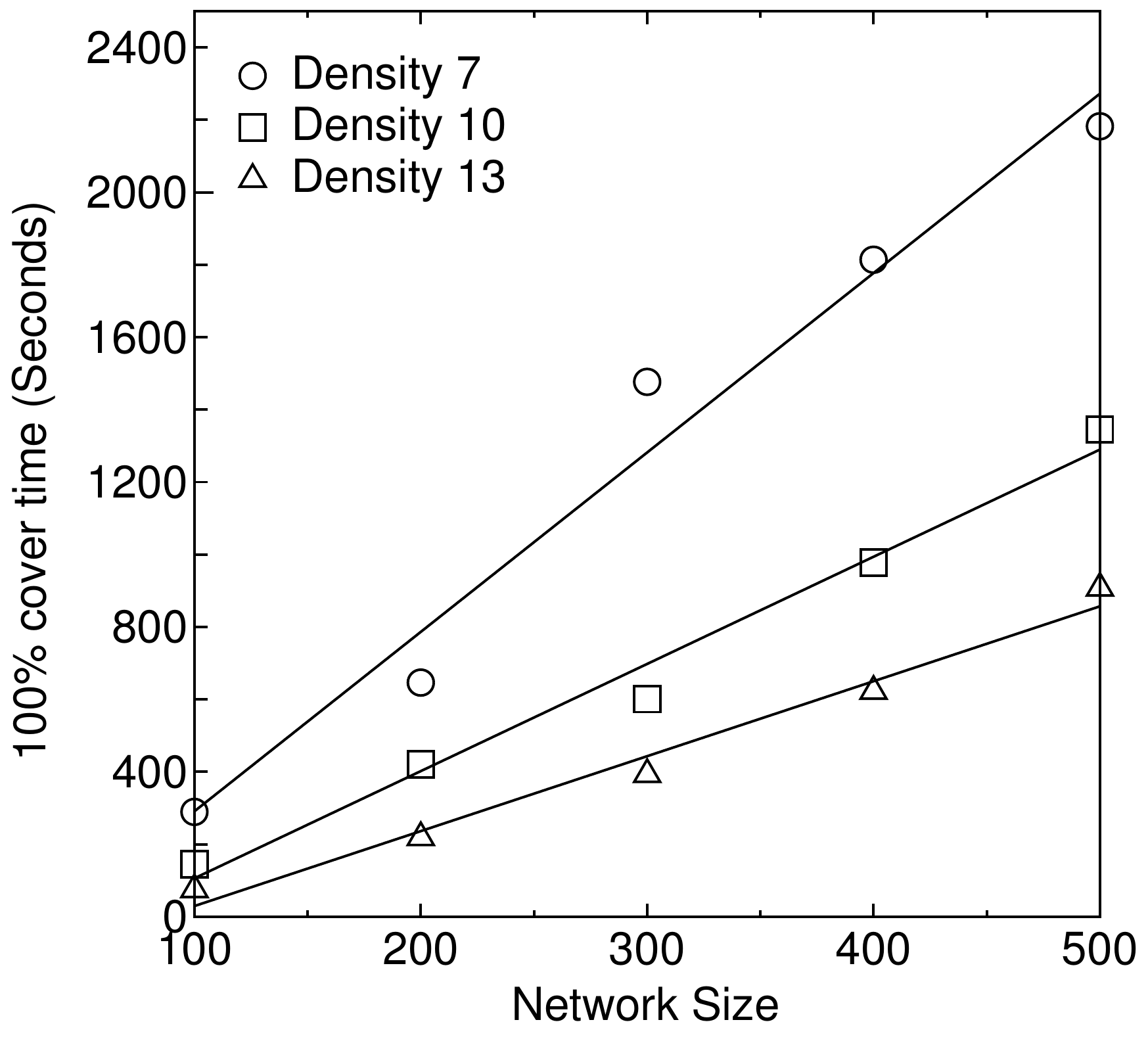}} \label{fig:6a}} \quad
      \subfigure[] {\scalebox{0.35}{\includegraphics[width=\textwidth]{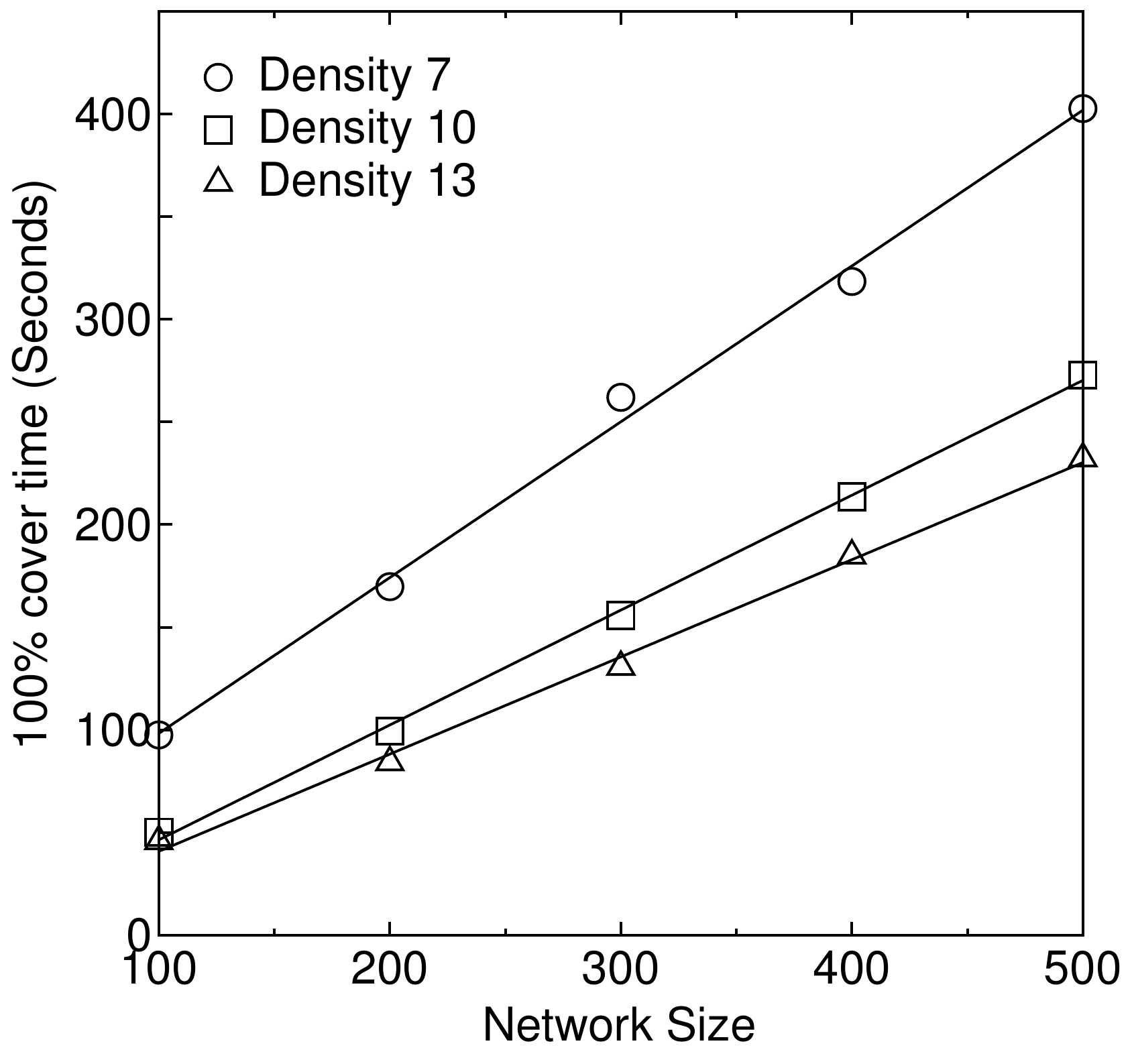}} \label{fig:6b}} \quad
      
      }
    \vspace{-2mm}  
    \caption{(a) Impact of density (local bias) (b) Impact of density (gradient bias)}
       \label{res6}
  \end{center}
\vspace*{-2mm}
\end{figure*}

In Fig.~\ref{res6}, we quantify the impact of the average neighborhood size ($d$) on cover time. We have considered single token scenario to eliminate effects of multiple tokens. The graph shows an expected decrease in cover time at higher densities.

\subsubsection{Impact of mobility model}

\begin{figure*}[htbp]
\vspace*{-3mm}
  \begin{center}
    \mbox{
      \subfigure[] {\scalebox{0.37}{\includegraphics[width=\textwidth]{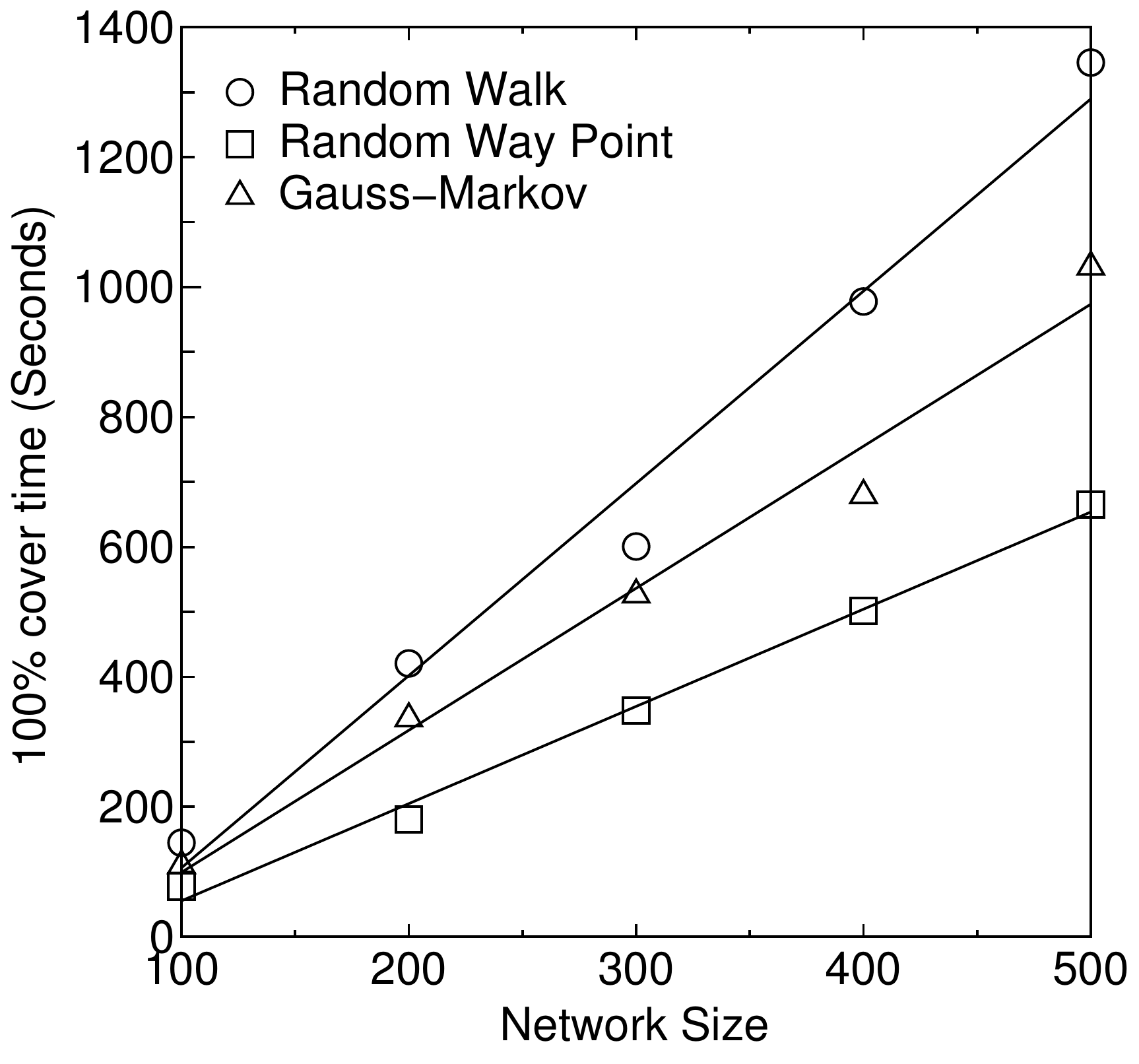}} \label{fig:7a}} \quad
      \subfigure[] {\scalebox{0.35}{\includegraphics[width=\textwidth]{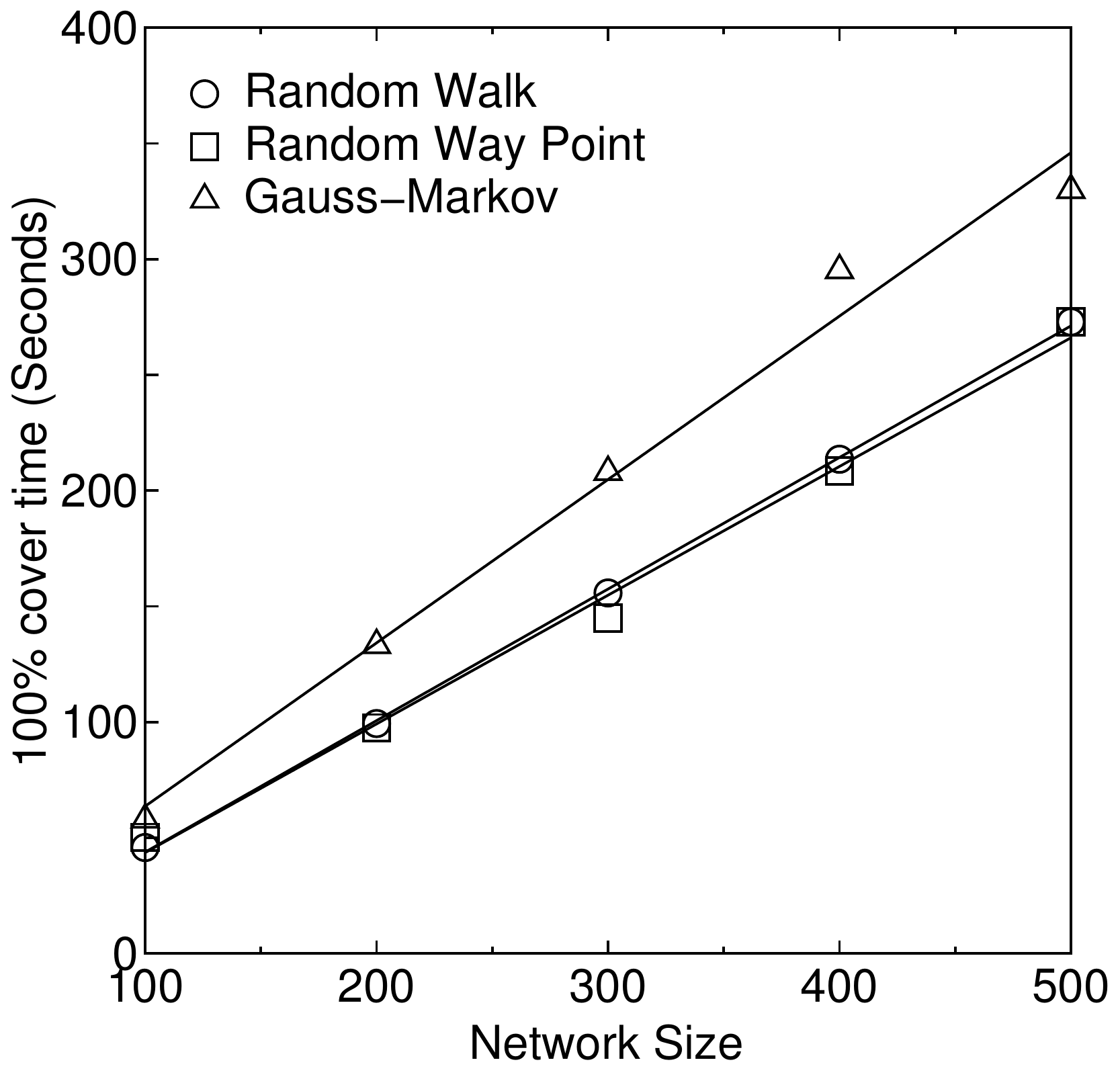}} \label{fig:7b}} \quad
      
      }
    \vspace{-2mm}  
    \caption{(a) Impact of mobility models (local bias) (b) Impact of mobility models (Census) }
       \label{res7}
  \end{center}
\vspace*{-2mm}
\end{figure*}

We simulated {\em Census} under other mobility models, random waypoint and Gauss Markov. The node speeds remain in the range of $2-4$ m/s. For random way point, the pause time is set to $2$ seconds between successive changes. In the Gauss Markov model, where motion characteristics are correlated with time, tuned with a parameter $\alpha$. We set $\alpha=0.75$. Velocity and direction are changed every $1$ second in the Gauss Markov Model. From Fig.~\ref{res7}, we observe that the cover time characteristics are similar for these different mobility models, even if the exact times show some variation, indicating robustness with respect to mobility model especially for Census. 

\subsubsection{Impact of node speed}

\begin{figure*}[htbp]
\vspace*{-3mm}
  \begin{center}
    \mbox{
      \subfigure[] {\scalebox{0.37}{\includegraphics[width=\textwidth]{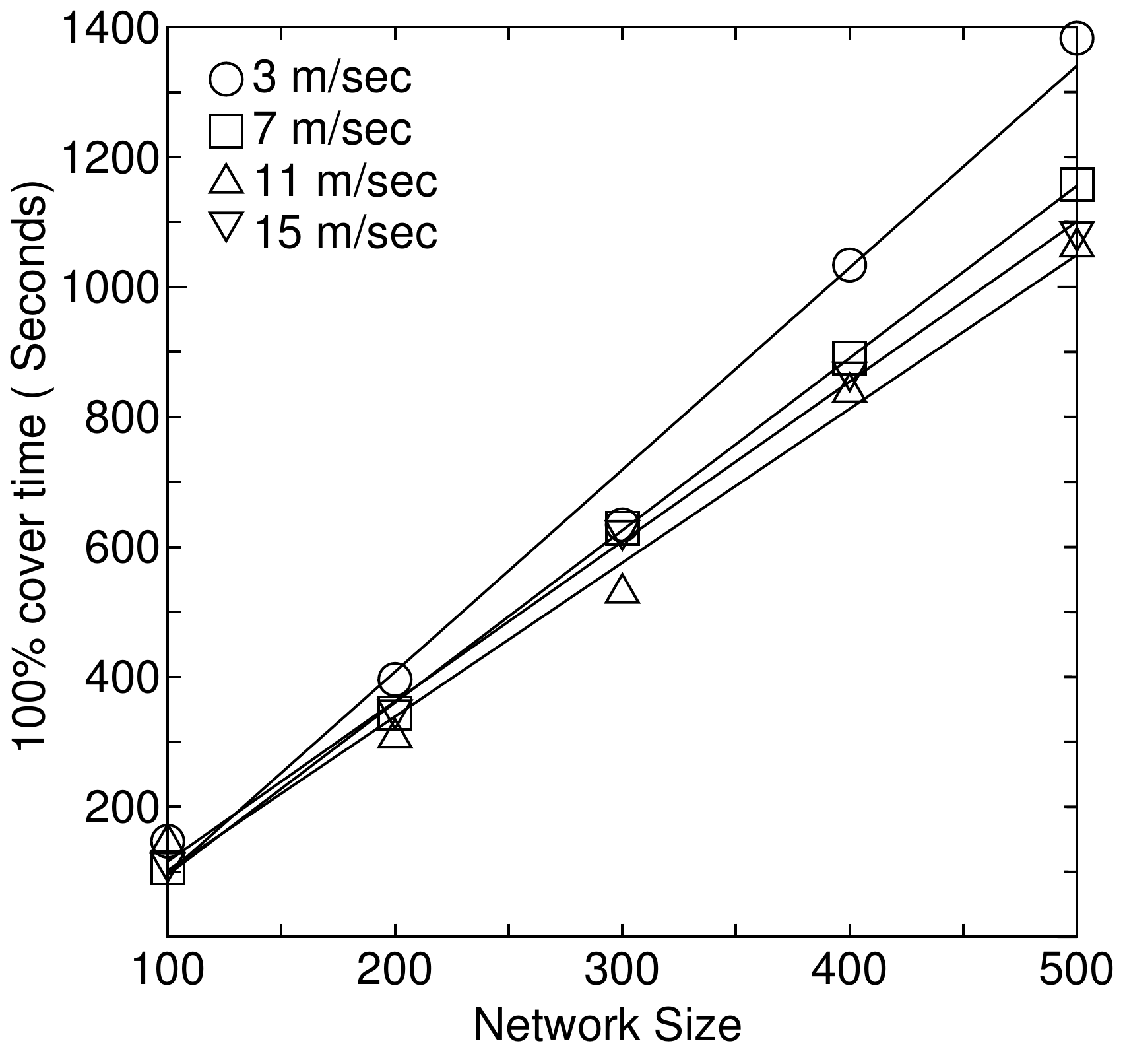}} \label{fig:speedtrials}} \quad
      \subfigure[] {\scalebox{0.35}{\includegraphics[width=\textwidth]{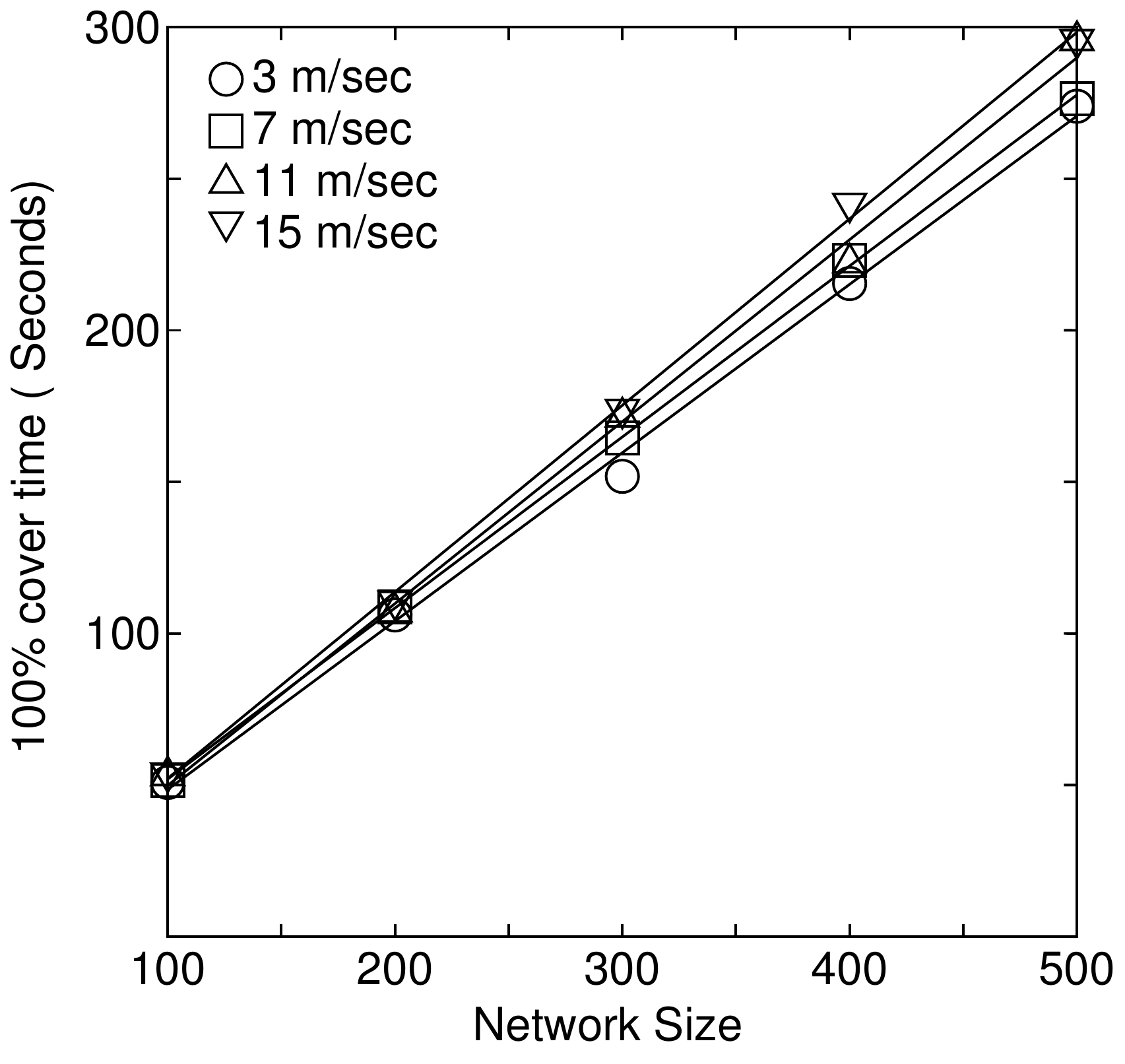}} \label{fig:speedtrials-gradient}} \quad
      
      }
    \vspace{-2mm}  
    \caption{(a) Impact of node speed (local bias) (b) Impact of node speed (gradient bias) }
       \label{speedtrialsab}
  \end{center}
\vspace*{-2mm}
\end{figure*}

To highlight that {\em Census} is robust with respect to the rate of mobility induced link changes, we simulated {\em Census} under different node speeds. As observed in Fig.~\ref{speedtrialsab}, even as the average node speed increases to $15m/s$, the cover time remains quite steady. For random walks with local bias, we observe that for larger networks, cover time improves with higher node speed. This is probably explained by the fact that at higher speeds, nodes which are temporarily disconnected from the portion that has a token, tend to converge with the connected component sooner and thus reduce the long tail in cover time.

\section{Discussion}
In this section, we discuss some issues and optimization techniques that are not related to the core idea of using random walks for token coverage, but nevertheless are important in the context of implementing {\em Census} in a MANET.

\subsection{Reliable token transfer}

Reliable token transfer is critical for successful operation. If a token is released by a node, but the intended recipient did not receive the token reply message, the token is lost. At the same time, if the sending node relies on acknowledgements to release a token, it is possible that the acknowledgements are lost and duplicate tokens are created. For applications where duplicate counting is not permitted, this is a problem. This issue can be addressed in practice by using acknowledgments in conjunction with checkpoints. The procedure is described below.

As soon as a token reply has been sent, the sender releases the token (the node resets holder to zero). At the same time, it remains in a waiting state for acknowledgements from the recipient. If an acknowledgement is not received within a time $T_a$, the token send message is repeated up to a maximum of $K$ re-tries. If the recipient receives the token multiple times, it simply repeats the acknowledgement message. However, if the token sender does not receive the acknowledgement even after $K$ retries, it creates a {\em checkpoint} for the token: (a) the aggregate computed thus far is appended to the token along with the token id, (b) a fresh token id is created (unique ids can be created by simply assigning a node's id to the token during creation) and (c) the token aggregate is reset. It is possible that the token was actually successfully passed, but even in this case the checkpoint will not create duplicate counting. At the same time, the process ensures that data is not lost. 

\subsection{Termination detection}

When using Census, termination can be deterministically detected. Note that when all nodes have been visited, the gradient setup will be terminated because the gradient setup is only initiated by nodes that have not been visited. As a result, a node that holds a token will continue to get only a level $0$ reply for its token announcement. If a gradient is being setup, there would be at least one neighboring node with a value of level $> 0$. Therefore, when nodes holding the token get a level $0$ reply from all its neighbors over an interval greater than the gradient refresh interval, the holder nodes can conclude that all nodes in the network have been visited.

In contrast, when using random walks with only locally biased random walks, there is no deterministic way to detect termination. However, as the percentage of visited nodes increases, the ratio of token transfers to the visited nodes starts increasing. This ratio may be used to design an approximate threshold for termination detection. Moreover, the result in Fig.~\ref{fig:2a} shows the expected ratio of token transfers to the visited nodes at different levels of coverage with local bias. In this figure, we see that until about $80-90\%$ coverage, there is very little variance in the token passing overhead ratio across different network sizes. Therefore, these values can be used to determine approximate thresholds for terminating a random walk trial at a desired level of coverage, irrespective of network size. 

\subsection{Token exfiltration}

Once the initiated tokens have visited all nodes, it is necessary to ex-filtrate the tokens to a given location such as the operating base station or to one or more querying nodes in the network. Instead of using structures to route these aggregates towards querying nodes, a simple solution is to simply flood the aggregate tokens across the network in $O(D)$ time (where $D$ is the network diameter) with an $O(Nk)$ message overhead where $k$ is the number of tokens. This leads to a potential question: why not use flooding or diffusion based approaches all the way? Note that the cost of disseminating data from each node to all other nodes is $O(N^2)$ where $N$ is the number of nodes in the network.  By using a fixed number of $k$ tokens to first compute the aggregates and then flooding the aggregates, the message overhead for flooding is only $O(Nk)$.  Thus, our bounds on message overhead remain unaffected. 

Note that other structure-free solutions are also possible for token ex-filtration. For instance, another potential solution is to transmit the $k$ aggregated tokens using a long distance transmission link (such as cellular or satellite links) in hybrid MANETs where the {\em long links} are used for infrequent, high priority data.  

\subsection{Independent trials of local bias}

So far, our results have shown that utilizing Census has a much better cover time than local bias, through the use of short, temporary gradients for attracting the tokens towards unvisited nodes. However, there is one scenario in which just the use of local bias can achieve fast convergence without utilizing gradients --- the idea here is to use concurrent, independent trials of locally biased random walks under the assumption that the aggregation objective can tolerate redundant undetected copies in the result. We describe this optimization below.

Here we exploit the fact that until a significant fraction of nodes are visited (around $75\%$), locally biased random walks have very little wasted exploration (See Fig.~\ref{fig:2a}). In other words, the ratio of token transfers to that of visited nodes remains very small ($<2$) until this point and then rises steeply. So we introduce a constant number (about $4-6$, independent of network size) of concurrent random walk trials in the network with local biasing, and observe that by terminating individual trials before the inflection point, the union of these trials visits almost all the nodes in the network. 

\begin{figure*}[htbp]
\vspace*{3mm}
  \begin{center}
    \mbox{
      \subfigure[] {\scalebox{0.35}{\includegraphics[width=\textwidth]{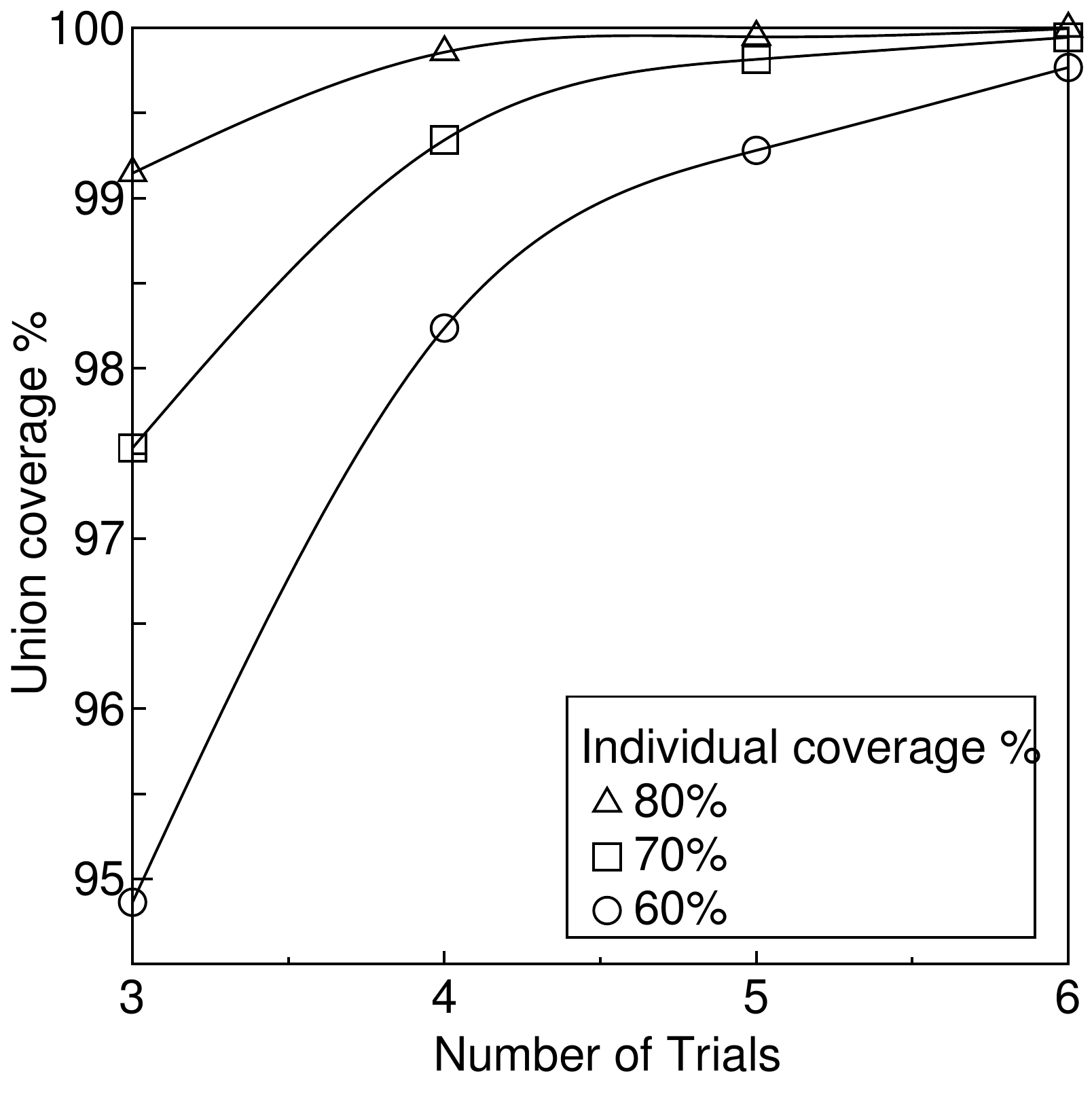}} \label{fig:8a}} \quad
      \subfigure[] {\scalebox{0.35}{\includegraphics[width=\textwidth]{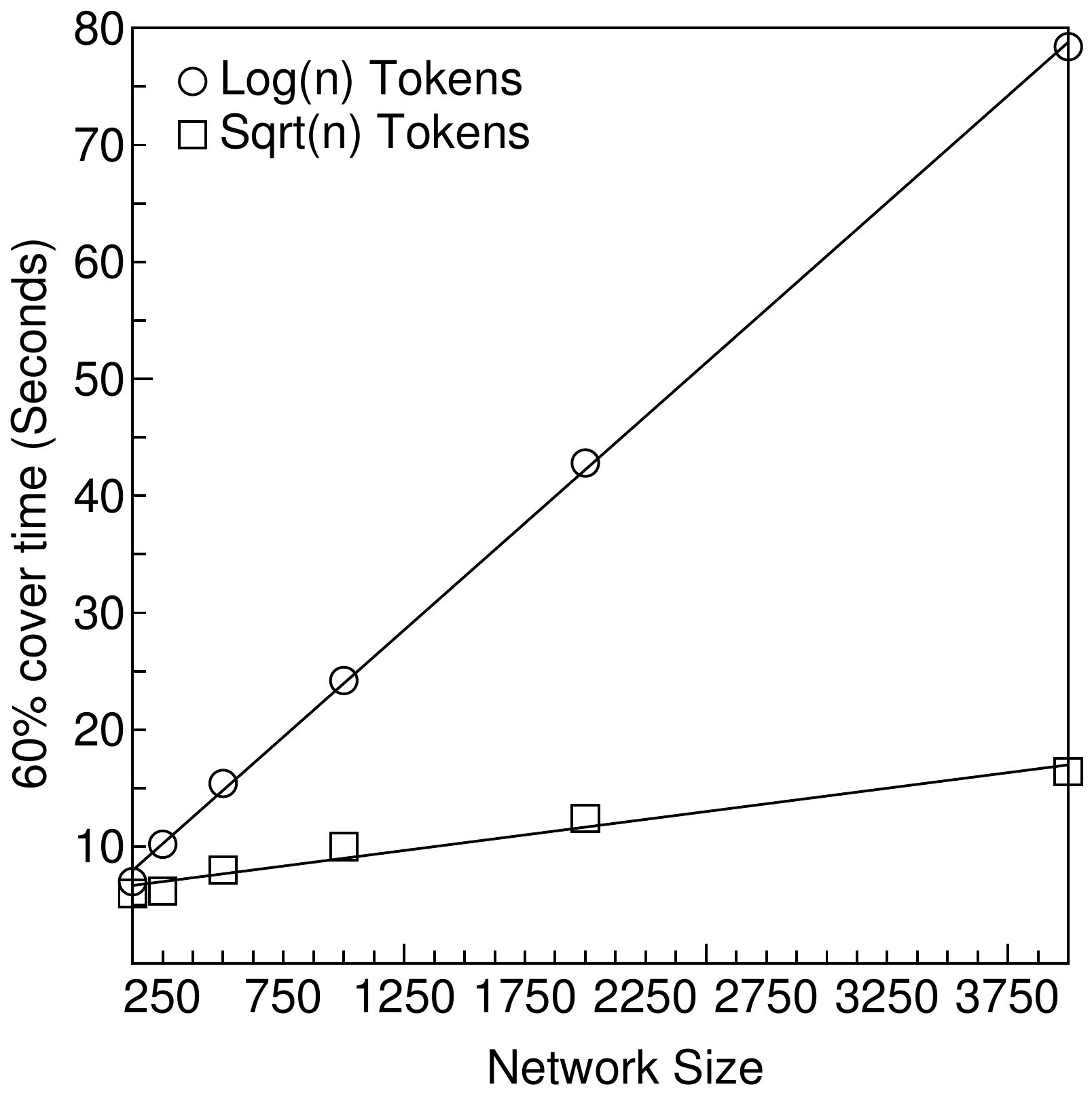}} \label{fig:8b}} \quad
      
      }
    \vspace{-2mm}  
    \caption{(a) Union coverage under multiple, independent trials at different levels of individual coverage (b) Cover time for $60\%$ coverage with local bias ($\sqrt{N}$ and $log(N)$ tokens)}
       \label{res8}
  \end{center}
\vspace*{-2mm}
\end{figure*}

In Fig.~\ref{fig:8a}, we show the combined coverage with different combinations of partial completion levels and number of trials. The numbers shown are the worst case numbers over all network sizes. As seen in the figure, $5$ trials each running to $60\%$ can cover $99\%$ of the nodes. In Fig.~\ref{fig:8b}, we show the $60\%$ cover time with local bias using $\sqrt{N}$ and $log(N)$ tokens and note that these times are much smaller than the corresponding $100\%$ cover times shown in Fig.~\ref{res3} for local bias. Also, recall from Fig.~\ref{fig:2a} that there is very little wasted exploration until a coverage of about $70\%$ -- the ratio of token passes to the number of visited nodes stay under $2$. Moreover, as seen in Fig.~\ref{fig:2a}, there is also very little variance across network sizes in terms of the ratio of token transfers to visited nodes -- {\em this fact can be leveraged to determine terminating conditions for individual trials}.

Thus, the cover time is shown to reduce significantly with this approach, while maintaining a very low exploration overhead. But, note that this optimization may not be appropriate for all types of aggregation (e.g., count and sum) because some node values may be reflected in multiple trial aggregates. The method is especially useful when the goal is to compute extremal values (such as max or min or range) of node state in the network \cite{learnhist}.


\section{Related work on random walks}
Results on the cover time range for random walks in graphs have been shown to vary from $O(Nlog(N))$ for complete graphs to $O(N^3 )$ for lollipop graphs \cite{rw1, rw2, rw3}. Typically, cover times are lower for dense, highly connected graphs and tend to increase as connectivity decreases \cite{rw8}. A speed-up by a factor of $k$ has been shown when $k$ independent random walks are utilized in the graph \cite{rw4, rw5, rw6}. However, all these results have been obtained in the context of static graphs. Cover times for biased random walks in time-varying graphs (relevant for mobile networks), which are the focus of this paper, have not been studied to the best of our knowledge. 

That being said, our results on cover times for node visitation appear to be related to cover times of random walks on certain {\em static} geometric graphs \cite{rw7}, in which nodes are placed uniformly on a unit square and two nodes are connected if and only if their Euclidean distance is less than some radius $r$. For mesh networks modeled as geometric graphs with uniform degree of connectivity, the expected cover time is known to be $O(N log^2 N)$ \cite{rw7}. We have shown that this bound can be improved to $O(Nlog(N))$ with local biasing and $O(N)$ with Census.

We also note that in previous work there has been some empirical evidence of obtaining an $O(Nlog(N))$ cover time for {\em static 2d grids} by exploiting some form of choice in random walks, where preference is given to less visited nodes at each step \cite{rw8}. The variance in the number of visits per node for such processes has been studied in \cite{rw12}. The locally biased random walk explored in this paper is quite different than random walks with choice. In a locally biased random walk, unvisited nodes are preferred whenever they are available, and if there are no unvisited neighbors, the token is moved to any randomly selected neighbor. Nodes do not keep track of the number of times they have been visited, but rather just keep track of whether they have been visited. The motivation for local biasing is to simply move opportunistically towards unvisited nodes whenever they are within range. Moreover, while the results in \cite{rw8} are empirical, in this paper we have analytically shown the cover times for local and gradient biased random walks.



\section{Conclusions}
In this paper, we introduced the idea of using biased random walks for the node visitation problem in MANETs. We observed that random walks per se have high cover times, and that using only local one-hop bias, where the token prefers an unvisited neighbor whenever available, reduces the coverage time significantly to an order complexity of $O(Nlog(N)/k)$ but exhibits a long tail before all nodes are covered. To redress these shortcomings, we introduced a temporary multi-hop gradient bias to pull the tokens towards unvisited nodes. The resulting protocol has a convergence time of $O(N/k)$, avoids a long tail and significantly reduces the coverage time as well as the token passing overhead. 

We quantified the performance of {\em Census} under different network conditions. Our analysis shows that gradient bias, outperforms unbiased as well as local bias random walks under all of these conditions.  We showed that {\em Census} has very little state overhead and is naturally resilient to topology changes of MANETs. We also showed how {\em Census} outperforms other techniques for node visitation such as flooding, gossip, and structure based routing protocols. Our analysis of {\em Census} is corroborated by simulations in ns-3 for networks ranging from 125-4000 nodes under different densities, mobility models, and node speeds in MANETs, thus demonstrating the scalability and robustness of {\em Census}. 


We note that the idea of successively visiting all nodes in the network using token passing has other applications beyond counting and data aggregation. For example, the protocol could also be used to provide every node an access to a critical resource in a mutually exclusive manner, such as the use of a shared high-bandwidth link. Moreover, in some operations such as averaging and computing histograms of network state, visiting almost all but not necessarily all nodes might well suffice.  For such operations, using only local one-hop bias with independent, concurrent trials might well be sufficient, as opposed to using multi-hop gradients.

Importantly, {\em Census} is lightweight in terms of resource requirements and makes rather minimal assumptions of the underlying network.  In particular, it does not assume knowledge of node addresses or locations, require a neighborhood discovery service or network topology information, or depend upon any particular routing or transport protocols such as TCP/IP.  A key implication is that {\em Census} is suitable for heterogeneous networks (and radio platforms).  Likewise, {\em Census} is not merely useful as an application layer protocol but can serve as a network routing protocol itself.  In this sense, we regard {\em Census} as a candidate for a MANET network architecture that allows for Application Specific Network Patterns (ASNPs) so as to achieve scalability and robustness.  These attributes have emerged as being important given recent high profile failures of MANETs, especially the JTRS program \cite{jtrs} which invested several billion dollars hoping to realize scalable, heterogeneous MANETs, that have led to a clean-slate redesign of MANET solutions \cite{fwid1, fwid2}.  The interested reader is referred to a companion document \cite{asnpref} on the ASNP architecture and its use of a minimal link layer abstraction to support multiple ASNPs and their applications on heterogeneous radios.

\singlespacing
\bibliographystyle{abbrv}
\bibliography{vinod}

\end{document}